%% file: main_jumbleequivalence.tex
\documentclass[envcountsame, cleveref]{lipics-v2021}
\usepackage{complexity}
\usepackage{amsmath,amsthm}

\hideLIPIcs
\nolinenumbers

\newtheorem{problem}[]{Problem}

\usepackage{amssymb}
\usepackage{multirow}
\usepackage{tikz-network}
\usepackage{tikz}
\usetikzlibrary{arrows,automata,positioning,calc}
\usepackage{comment}
\usepackage{algorithm}
\usepackage[noend]{algpseudocode}

\usepackage{hyperref}

\usepackage{pdflscape}

\input{shortcuts}

\newcommand{\orcidauthorA}{0000-0002-1531-7970} 
\newcommand{\orcidauthorB}{0009-0005-1145-5806} 
\newcommand{\orcidauthorC}{0000-0003-3153-0314} 
\newcommand{\orcidauthorD}{0000-0003-1008-4642} 
\newcommand{\orcidauthorE}{0009-0000-3906-2581} 

\title{Jumbled Simon's Equivalence and Embeddings}
 
\titlerunning{}

\author{Pamela Fleischmann}
{Kiel University, Department of Computer Science, Germany}
{fpa@informatik.uni-kiel.de}
{\orcidauthorA}{}

\author{Annika Huch}
{Kiel University, Department of Computer Science, Germany}
{ahu@informatik.uni-kiel.de}
{\orcidauthorB}{}

\author{Sungmin Kim}
{Department of Computer Science, Yonsei University, Republic of Korea}
{rena\_rio@yonsei.ac.kr}
{\orcidauthorC}{}

\author{Tore Koß}
{University of Liege, Institute of Mathematics, Belgium}
{tore.koss@uliege.be}
{\orcidauthorD}{this research was supported by the University of Liege under Special Funds for
Research, IPD-STEMA Programme}   

\author{Emil Wiards}
{Kiel University, Department of Computer Science, Germany}
{stu249700@mail.uni-kiel.de}
{\orcidauthorE}{}

\authorrunning{P. Fleischmann \etal}

\EventEditors{John Q. Open and Joan R. Acces}
\EventNoEds{2}
\EventLongTitle{42nd Conference on Very Important Topics (CVIT 2016)}
\EventShortTitle{CVIT 2016}
\EventAcronym{CVIT}
\EventYear{2016}
\EventDate{December 24--27, 2016}
\EventLocation{Little Whinging, United Kingdom}
\EventLogo{}
\SeriesVolume{42}
\ArticleNo{23}

\ccsdesc[500]{Mathematics of computing~Combinatorics on words}
\ccsdesc[500]{Mathematics of computing~Permutations and combinations}
\ccsdesc[300]{Theory of computation~Formal languages and automata theory}

\keywords{scattered factor, jumbled scattered factor, longest common subsequence, permutations} 

\Copyright{Pamela Fleischmann, Annika Huch, Sungmin Kim, Tore Koß, Emil Wiards} 

\begin{document}

\maketitle

\input{abstract}

\addtocounter{page}{-1}
\newpage
\section{Introduction}
\input{intro}
\section{Preliminaries}
\input{prelims}

\section{Testing Jumbled Simon's Equivalence}
\input{automaton}

\section{The Jumbled Binomial Coefficient}

\input{binom.tex}

\section{Conclusion}
\input{conclusion.tex}

\newpage

\bibliography{bib.bib}

\end{document}

%% file: shortcuts.tex
\def\ta{\mathtt{a}}
\def\tb{\mathtt{b}}
\def\tc{\mathtt{c}}

\def\etal{\emph{et al. }}

\def\nth#1{#1$^{\text{th}}$}

\def\N{\mathbbm{N}}

\def\abs#1{|#1|}
\def\len#1{|#1|}

\renewcommand{\epsilon}{\varepsilon}
\newcommand{\eps}{\varepsilon}
\renewcommand{\phi}{\varphi}

\def\N{\mathbb{N}} 

\DeclareMathOperator{\al}{alph}

\usepackage{bbm}

\newcommand{\ScatFact}{\mathbb{S}}
\newcommand{\JumbledScatFact}{\mathbb{J}}
\newcommand{\JScatFact}{\JumbledScatFact}
\newcommand{\JSF}{\JumbledScatFact}
\DeclareMathOperator{\SJSF}{\overline{\JScatFact}}

\newcommand{\CSF}{\operatorname{CSF}}
\newcommand{\lcsf}{\operatorname{lcsf}}
\newcommand{\LCSF}{\operatorname{LCSF}}

\newcommand{\LCS}{\LCSF}

\newcommand{\Emb}{\mathbb{E}}

\DeclareMathOperator{\letters}{alph}

\newcommand{\jsimkl}{\overset{J}{\sim}_{k,\ell}}
\newcommand{\jsim}{\overset{J}{\sim}}
\newcommand{\njsim}{\overset{J}{\nsim}}

\usepackage{shuffle}

\def\shuffle{\sqcup\mathchoice{\mkern-7mu}{\mkern-7mu}{\mkern-3.2mu}{\mkern-3.8mu}\sqcup}

\newcommand\restr[2]{{
  \left.\kern-\nulldelimiterspace 
  #1 
  \vphantom{\big|} 
  \right|_{#2} 
  }}

\newcommand{\newterm}[1]{\textnormal{#1}}

\usepackage{todonotes}

\newcommand{\todotk}[1]{\todo[color=green!20,inline]{Tore: #1}}

\newcommand{\todoski}[1]{\todo[color=pink!20,inline]{Sungmin: {#1}}}

\newcommand{\vect}[1]{\overrightarrow{#1}}
\newcommand{\XYranker}[4]{{#1}(#2,#3,#4)}

\newcommand{\sharpp}{\textsf{\upshape \#P}}
\newcommand{\adj}[1]{\texttt{\upshape Adj}({#1})}
\newcommand{\emb}{\texttt{\upshape Emb}}
\newcommand{\cho}[1]{\texttt{\upshape Cho}(#1)}

\newcommand{\jbinom}[3][]{\genfrac{[}{]}{0pt}{}{#2}{#3}_{#1}}

\newcommand{\loosejumbledembeddings}{\Emb}
\newcommand{\jumbledembeddings}{\Emb}
\newcommand{\loosejscatfact}{\JScatFact}
\newcommand{\restrict}[2]{#1|_{#2}}

\newcommand{\embeddings}{\operatorname{E}}
\newcommand{\embeddingword}{w}

%% file: abstract.tex
\begin{abstract}
    \emph{Jumbled scattered factors} are a generalisation of scattered factors (or subsequences) where a jumbled scattered factor is obtained from another word by deleting and rearranging letters. For limiting the number of rearrangements, we distinguish $\ell$-jumbled scattered factors where $\ell$ limits the number of letters that may occur rearranged. For instance, $\mathtt{car}$, $\mathtt{start}$, $\mathtt{rat}, \mathtt{cat}$ are jumbled scattered factors of $\mathtt{abstract}$ even though $\mathtt{car}$, $\mathtt{start}$ and $\mathtt{cat}$ are no scattered factors and each need one jumble. An example for a jumbled scattered factor with $2$ jumbles is $\mathtt{tatar}$. In this work, we investigate two problems on jumbled scattered factors from an algorithmic perspective. First, we give algorithms for testing jumbled Simon's equivalence where two words are equivalent if they share the same $\ell$-jumbled scattered factors up to length $k$. Second, we investigate how to count the possible occurrences of a jumbled scattered factor -- the jumbled binomial coefficient. After defining the jumbled binomial coefficient, we study jumbled embeddings and factorisations of such as well.
\end{abstract}

%% file: intro.tex
A reoccurring topic in computer science, data science and bioinformatics is the analysis of data. One theoretical model for data are \emph{scattered factors} (also called subsequences, subwords) which are obtained by deleting some letters in a given string while preserving the order of the remaining letters. Formally, a \emph{scattered factor} of a word $w$ is
another word $u$ of length $m$ such that there exist (possibly empty) words $v_1, \dots, v_{m+1}$ satisfying $w = v_1 u[1] v_2 \cdots v_m u[m] v_{m + 1}$. 
For example, $\mathtt{cat}$, $\mathtt{care}$, and $\mathtt{sad}$ are scattered factors of
$\mathtt{scattered}$, while $\mathtt{dare}$ and $\mathtt{dart}$ are not.
Scattered factors are of ongoing research interest, dating back to the early investigations by Higman \cite{higman1952ordering} and Haines \cite{haines1969free} in the 1950s and 1960s.
In 1975, Imre Simon continued the research on scattered factors (\emph{piecewise subwords}) in the context of piecewise-testable languages \cite{simon1972hierarchies,DBLP:conf/automata/Simon75}. Since then, scattered factors attracted widespread attention in many fields such as combinatorics on words \cite{DBLP:journals/fuin/KoscheKMS22,DBLP:journals/jcss/MateescuSY04}, algorithms and stringology \cite{DBLP:journals/tcs/Baeza-Yates91,DBLP:journals/algorithmica/BannaiIKKP24,DBLP:conf/soda/BringmannK18,DBLP:journals/aam/ForresterM23,DBLP:journals/jacm/Maier78,DBLP:journals/jacm/WagnerF74}, logics \cite{DBLP:journals/lmcs/BaumannGTZ23,DBLP:conf/lics/HalfonSZ17,10.1007/978-3-030-50026-9_21,10.1007/978-3-030-17127-8_20}, and language and automata theory \cite{DBLP:conf/isaac/AdamsonFHKMN23,DBLP:journals/ipl/KarandikarKS15,DBLP:journals/lmcs/KarandikarS19,simon1972hierarchies,DBLP:conf/automata/Simon75,zetzsche:LIPIcs.ICALP.2016.123}. Further applications of scattered factors can be found in bioinformatics \cite{DBLP:journals/bib/DoLL21,DBLP:conf/aaai/WangCG20}, formal
software verification \cite{DBLP:conf/lics/HalfonSZ17,zetzsche:LIPIcs.ICALP.2016.123}
and database theory \cite{artikis2017complex,FrochauxK23,Kleest-Meissner22,Kleest-Meissner23,SchmidSIGMOD}.
The fundamental work of Simon \cite{simon1972hierarchies} also introduced a congruence relation that is nowadays called \emph{Simon's congruence}. Two words are Simon $k$-congruent for a natural number $k$, if they share the same scattered factors up to length $k$. Recent developments consider problems of testing and deciding Simon's congruence \cite{DBLP:conf/mfcs/FleischerK18,DBLP:conf/stacs/GawrychowskiKKM21,DBLP:conf/dlt/KimHKS23}.

Ongoing research on scattered factors motivated the introduction of \emph{jumbled scattered factors}, a generalisation of scattered factors which considers not only
scattered factors but also \emph{rearrangements} (see e.g. \cite{DBLP:conf/cwords/Puzynina19,DBLP:journals/tcs/KarhumakiPRW17} for abelian equivalence of words) of them \cite{DBLP:conf/dlt/FleischmannHKK25}. A word $u$ of length $m$ is a jumbled scattered factor of
$w$ if $w = v_1 u[i_1] \cdots v_m u[i_m] v_{m + 1}$ for some permutation $i_1 i_2 \cdots i_m$ and (possibly empty) words $v_1, \dots, v_{m+1}$. For example,
apart from all scattered factors, $\mathtt{dare}$ and $\mathtt{dart}$ are jumbled scattered
factors of $\mathtt{scattered}$ but $\mathtt{retreat}$ is not since $\mathtt{r}$ occurs only once in $\mathtt{scattered}$. Further, they also considered restrictions on the
number of jumbles, i.e. $u$ is $\ell$-jumbled in $w$ if at most $\ell$
characters are out of order or, in other words, there is a common
scattered factor of $u$ and $w$ of length $|u| -\ell$. In that sense, the word $\mathtt{dare}$ is $1$-jumbled and $\mathtt{dart}$ is $2$-jumbled
in $\mathtt{scattered}$. 
In general, there is an inverse relation between the
required number of jumbles and the longest common scattered
factors. Simon's congruence and universality were also generalised to
jumbled scattered factors in the hope that insights on the jumbled
version translate to the non-jumbled case; though, notably, the
jumbled version of Simons congruence is only an equivalence and not a
congruence relation \cite{DBLP:conf/dlt/FleischmannHKK25}.


While thus-far-mentioned research mainly focused on the existence of
(jumbled) scattered factors, there has also been research on the
number of occurrences of a scattered factor in a word. Here, separate
occurrences are usually modelled using word embeddings,
i.e. strictly increasing maps from the set of positions of a scattered factor $u$ into
the set of positions of $w$  such that their letters match. Typically, $\binom{w}{u}$ denotes
the number of distinct embeddings of $u$ in $w$~\cite{lothaire}. For example,
we get $\binom{\mathtt{banana}}{\mathtt{ana}} = 4$. However, there is no such research
on jumbled scattered factors as of yet.



\textbf{Our Contribution.} In this work, we develop the work on jumbled scattered factors in two directions: on the one hand we investigate the problem of testing jumbled $k$-Simon-equivalence by presenting an algorithm for fixed $k$ and $\ell$ using finite automata and another one harnessing the connection between jumbled scattered fators and longest common scattered factors. On the other hand, 
we investigate the problem of counting the amount of occurrences of a given jumbled scattered factor in a given word.

%% file: prelims.tex
Let $\N$ denote the natural numbers starting with $1$ and let $\N_0 :=\N\cup\{0\}$.
Further, let $[n]:=\{1,\ldots,n\}$ for some $n\in\N$ and $[n]_0=[n]\cup\{0\}$.
An {\em alphabet} $\Sigma$ is a finite, non-empty set whose elements are called {\em letters}.
As $\Sigma$ is finite, it can be assumed ordered and w.l.o.g. we take $\Sigma:=\{\ta_1,\dots,\ta_{\sigma}\}$
for some $\sigma\in\N$ with $\ta_i<\ta_{i+1}$ for all $i\in[\sigma-1]$.
A finite concatenation of letters from $\Sigma$ is called a {\em word}. The set of words $\Sigma^{\ast}$ with concatenation as binary operation and the empty word $\varepsilon$, i.e., the word without any letters, as the neutral element, forms a free monoid. 
The number of letters in a word $w\in\Sigma^{\ast}$ is called its {\em length} and denoted by $|w|$; this implies $|\varepsilon|=0$. For all $k\in\N_0$, let $\Sigma^k :=\{w\in\Sigma^{\ast}|\,|w|=k\}$ and define $\Sigma^{\leq k}:=\{w\in\Sigma^{\ast}|\,|w|\le k\}$.
The letter of $w\in\Sigma^{\ast}$ at position $i\in[|w|]$ is denoted by $w[i]$. For $w\in\Sigma^{\ast}$ and $\ta\in\Sigma$ define
$|w|_{\ta}:=|\{i\in[|w|]\mid w[i]=\ta\}|$ to be the number of occurrences of $\ta$ in $w$.
The {\em Parikh vector}~$p(w)$
of $w\in\Sigma^{\ast}$
is defined as
$(|w|_{\ta_1},\ldots,|w|_{\ta_{\sigma}})\in\N_0^{\sigma}$.
We denote the~\nth{$i$} entry of $p(w)$ by $p(w)[i]$,
for each $i \in [\sigma]$. Note that two words have the same Parikh vector if and only if one word can be obtained
by rearranging letters of the other.
For example, $\mathtt{east}$ and $\mathtt{seat}$
have the same Parikh vector.
Given two words $u,v\in\Sigma^{\ast}$, we say $p(u)\leq p(v)$ (or $p(u) < p(v)$) if $|u|_{\ta_i}\leq|v|_{\ta_i}$ (or respectively $|u|_{\ta_i}<|v|_{\ta_i}$) for all $i\in[\sigma]$. 

For any set $A$,
let us denote the mapping which maps each element of $A$ to itself as~$\texttt{id}_A:A\to A$. For sets $A,B,C$ such that $A\subset B$ and a mapping $f:B\to C$ we denote by $f|_A:A\to C$ the \emph{restriction of $f$ to $A$}, i.e. the mapping which satisfies $f|_A(a)=f(a)$ for all $a\in A$. Further, we denote the composition of mappings by $\circ$, i.e. for three sets $A,B,C$ and mappings $g:A\to B$ and $f:B\to C$ we denote by $f\circ g:A\to C$ the mapping which satisfies $(f\circ g)(a) = f(g(a))$ for all $a\in A$. Moreover, we call any mapping $f:A\to B$ \emph{bijective} (\emph{surjective}; \emph{injective}) if for any $b\in B$ there is exactly (at least; at most) one $a\in A$ such that $f(a)=b$ and recall that $f$ is bijective if and only if there is an inverse mapping $f^{-1}:B\to A$ satisfying $f^{-1}\circ f = \texttt{id}_A$ and $f\circ f^{-1} = \texttt{id}_B$.

For the algorithmic results we use the standard 
computational model RAM with logarithmic word-size (see, e.g. \cite{KarkkainenSB06}), i.e. we 
follow a standard assumption from stringology, that if $w$ is the 
input word for our algorithms, then we assume $\Sigma=\letters(w)=\{ 1,2, \ldots, \sigma\}$.

We define {\em scattered factors}, also known as subsequences or subwords, which will built the fundament for their generalised version, the jumbled scattered factors.

\begin{definition}\label{def:SF}
  A word $u \in \Sigma^*$ is called a \newterm{scattered factor}
  of $w\in\Sigma^{\ast}$ if 
  we can delete letters from $w$
  and obtain $u$.
  Let $\ScatFact(w)$ be the set
  of all scattered factors of $w$,
  and let $\ScatFact_k(w)$ and $\ScatFact_{\leq k}(w)$
  be the set of scattered factors respectively
  with length exactly $k$ and length at most $k$.
\end{definition}

\begin{definition}
	Let $v, w \in \Sigma^*$.
    The set of {\em common scattered factors} of $v$ and $w$
    is defined as
	$\CSF(v, w) := \ScatFact(v) \cap  \ScatFact(w)$.
    Further, let us denote the set of {\em longest common scattered factors of $v$ and $w$}
    with $\LCSF(v,w)$.
    Their length is denoted by $\lcsf(v, w)$.
\end{definition}

For instance, \texttt{abca} is a scattered factor of \texttt{cbacbacba} but \texttt{bccc} is not.
Furthermore,
for $v=\texttt{aabbcc}$ and $w=\texttt{ccbbaa}$,
we have $\CSF(v,w)=\{\varepsilon,
\texttt{a},\texttt{b},\texttt{c},
\texttt{aa},\texttt{bb},\texttt{cc}\}$.
Note that the letters of the word~\texttt{abc}
occur in $w$ but not in the same order.
Thus, the concept of \emph{jumbled scattered factors}
was introduced to describe words whose letters appear
out-of-order in a larger word.

\begin{definition}\label{jscatfact}
  Let $w \in \Sigma^*$. A word $u \in \Sigma^*$ is called a \newterm{jumbled scattered factor} of $w$ if there exists $v \in \ScatFact(w)$ such that $p(u) = p(v)$. The set of jumbled scattered factors of $w$ is denoted by $\JScatFact(w)$. 
\end{definition}

For example, any word
that consists of at most two occurrences of
of each \texttt{a},
\texttt{b}, and \texttt{c}
is a jumbled scattered factor of \texttt{ccbbaa}.
However, this definition is extreme
in the sense that whether or not
a word~$u$ is a jumbled scattered factor
of another word~$w$
depends only on their Parikh vectors,
which means that the order of letters in $u$ nor $w$
do not matter.
Indeed, $u$ is a jumbled scattered factor
of $w$
if and only if $p(u)\le p(w)$.
Therefore, we formalise the extent
to which letters in $u$ appear
out-of-order in $w$
as the number of letters to jumble in $u$
to obtain a scattered factor of $w$.

\begin{definition}\label{def:jidx}
  Let $w \in \Sigma^*$. The word $u \in \Sigma^*$ is called a \newterm{scattered factor
  of $w$ with $\ell \in \N_0$ jumbles} (or $u$ is $\ell$-jumbled in $w$) 
  if $p(u)\leq p(w)$ and $\lcsf(u,w)\geq \abs{u} - \ell$.
  The set of scattered factors of $w$ needing at most $\ell$ jumbles is denoted by $\JScatFact(w,\ell)$.
  Again, the restriction to jumbled scattered factors of length exactly (at most) $k$ is denoted by a subscript $\JSF_k(w,\ell)$
  $(\JSF_{\leq k}(w,\ell))$.
\end{definition}

In \cite{DBLP:conf/dlt/FleischmannHKK25}, a more restrictive variant of jumbled scattered factors than in \cref{def:jidx} is considered: there is an additional requirement $\ell \in [|u|-1]_0$ with the intention that every time at least one letter of the jumbled scattered factor fits and one letter can not be ``out of order''. This condition implies the anomaly that $\varepsilon \in \ScatFact(w)$ but $\varepsilon\notin\JScatFact(w,0)$ which does not match the intuition that the set of words obtained by not jumbling any letter is equal to the set of classical scattered factors. Since we are interested in a natural extension of the classical scattered factors, especially in the case of jumbled embeddings, we drop the condition $\ell \in [|u|-1]_0$ which results in $\JSF(w,0) = \ScatFact(w)$ for all $w\in \Sigma^*$. Further, if $u\neq\varepsilon$ and $p(u)\leq p(w)$ then $\lcsf(u,w)\geq 1$, thus, $u$ is $(\len u-1)$-jumbled in $w$ if and only if it is $\ell$-jumbled in $w$ for any $\ell\geq \len u$.
Accordingly, we adjust the definition of
$\ell$-jumbled scattered factors of a word~$w$,
where $\ell$ is the degree
a jumbled scattered factor deviates from being
a scattered factor of $w$.

Note that, for $w\in\Sigma^\ast$, $u \in \JSF(w, \ell)$ implies that every scattered factor of $u$ is at most $\ell$-jumbled in $w$. By \Cref{def:jidx} we have that \texttt{owl} is $1$-jumbled and \texttt{wolf} is not only $2$-jumbled but also $3$-jumbled depending on which \texttt{l} in \texttt{cauliflower} is chosen. For a more precise description of the different possibilities in which a jumbled scattered factor appears in a word, we extend the notion of embeddings introduced in \cite{DBLP:conf/isaac/DayKMS22}.

\begin{definition}
    Let $w \in \Sigma^*$, $\ell \in \N_0$ and $u\in\JSF(w,\ell)$. An \newterm{$\ell$-jumbled embedding} of $u$ into $w$ is an injective function $e : [|u|] \to [|w|]$ such that $u[i] = w[e(i)]$ for all $i \in [|u|]$ and there exists a subset $S \subseteq [|u|]$ of size $|S| \geq |u| - \ell$ such that the restriction $e|_S$ is strictly increasing.
    The set of $\ell$-jumbled embeddings of $u$ into $w$ is denoted by $\loosejumbledembeddings(u,w,\ell)$.
\end{definition}

Consider for example $\mathtt{wolf}$ as a $2$-jumbled scattered factor in $\mathtt{cauliflower}$. A $2$-jumbled embedding of $\mathtt{wolf}$ is, e.g. $\{1 \mapsto 9, 2 \mapsto 8, 3 \mapsto 4, 4 \mapsto 6\}$ with the strictly increasing restriction $\{3 \mapsto 4, 4 \mapsto 6\}$. 
Notice that, analogously to classical embeddings of scattered factors, jumbled embeddings may not be unique within a word, e.g. $\{1 \mapsto 9, 2 \mapsto 8, 3 \mapsto 7, 4 \mapsto 6\}$ is a $3$-jumbled embedding of $\mathtt{wolf}$ in $\mathtt{cauliflower}$.

Based on the longest common scattered factor we define the jumbled neighbourhood.

\begin{definition}
    For $w\in\Sigma^\ast, \ell \in \N_{0}$ the \emph{$\ell$-jumbled neighbourhood of $w$} is defined as the language $N(w,\ell) := \{v\in\Sigma^{|w|}\mid \lcsf(v,w) \geq |w|-\ell\}$.
\end{definition}

The following result links the jumbled scattered factors of a word with the downward closure of the $\ell$-jumble-neighbourhood. 
\begin{proposition}
    Let $w \in \Sigma^{*}$, $k, \ell \in \N_0$ with $\ell \leq k \leq |w| $. Then
    $\JScatFact_{\leq k}(w, \ell)=
    \bigcup_{x\in N(w,\ell)}\ScatFact_{\leq k}(x)$.
\end{proposition}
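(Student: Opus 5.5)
The plan is to prove the two inclusions separately, working directly from \Cref{def:jidx} and the definition of $N(w,\ell)$. I read the neighbourhood as consisting of \emph{rearrangements} of $w$, i.e.\ words $x$ with $p(x)=p(w)$ and $\lcsf(x,w)\ge|w|-\ell$. This reading is needed, and it is the main obstacle. With only $x\in\Sigma^{|w|}$, take $w=\ta\tb$, $k=2$, $\ell=1$, $x=\ta\ta$. Then $\lcsf(x,w)=1\ge |w|-\ell$, so $x\in N(w,1)$ and $\ta\ta\in\ScatFact_{\le 2}(x)$. But $p(\ta\ta)\not\le p(w)$, so $\ta\ta\notin\JSF(w,1)$. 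So I would first state the Parikh condition $p(x)=p(w)$ explicitly, either as part of $N(w,\ell)$ or as an added hypothesis, and then prove the statement.

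\emph{Inclusion $\supseteq$.} Let $x\in N(w,\ell)$ and $u\in\ScatFact_{\le k}(x)$.
\begin{enumerate}
\item Since $u$ is a scattered factor of $x$, we get $p(u)\le p(x)=p(w)$.
\item Fix $z\in\LCSF(x,w)$, so $|z|\ge|w|-\ell$, and fix an embedding of $z$ into $x$. At most $|x|-|z|\le\ell$ positions of $x$ lie outside the image of this embedding.
\item Fix an embedding of $u$ into $x$. Keep only the letters of $u$ whose images lie in the image of $z$. This gives a scattered factor $u'$ of $u$ with $|u'|\ge|u|-\ell$.
\item The retained positions are increasing inside the embedding of $z$, so $u'$ is also a scattered factor of $z$, hence of $w$.
\end{enumerate}
Therefore $\lcsf(u,w)\ge|u|-\ell$, and together with step 1 this gives $u\in\JSF_{\le k}(w,\ell)$.

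\emph{Inclusion $\subseteq$.} Let $u\in\JSF_{\le k}(w,\ell)$, and let $z\in\CSF(u,w)$ with $|z|=m\ge|u|-\ell$. Fix embeddings of $z$ into both $u$ and $w$. Let $R$ be the multiset of the $r:=|u|-m\le\ell$ letters of $u$ outside the embedding of $z$. I would build a rearrangement $x$ of $w$ in two steps.
\begin{enumerate}
\item \emph{Delete.} Since $p(u)\le p(w)$, we have $p(u)-p(z)\le p(w)-p(z)$. So the positions of $w$ outside the embedding of $z$ contain $R$ as a sub-multiset. Delete $r$ such positions, carrying exactly the letters of $R$, to obtain $w'$. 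Then $|w'|=|w|-r$, $w'$ still contains $z$ as a scattered factor, and $w'\in\ScatFact(w)$.
\item \emph{Reinsert.} Each letter of $R$ sits in $u$ between two consecutive letters $z[j]$ and $z[j+1]$ (or before $z[1]$, or after $z[m]$). Insert it into $w'$ directly after the image of $z[j]$ (or at the corresponding end), keeping the order that these letters have in $u$. Call the result $x$.
\end{enumerate}
This $x$ has the required properties:
\begin{itemize}
\item $p(x)=p(w')+p(R)=p(w)$ and $|x|=|w|$.
\item $u$ is a scattered factor of $x$: use the image of $z$ together with the inserted letters, which appear in the correct relative order.
\item $w'$ is a common scattered factor of $x$ and $w$, so $\lcsf(x,w)\ge|w|-r\ge|w|-\ell$.
\end{itemize}
Hence $x\in N(w,\ell)$, and $u\in\ScatFact_{\le k}(x)$ because $|u|\le k$. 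This finishes the argument.

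The hypothesis $\ell\le k\le|w|$ plays no essential role in either direction.
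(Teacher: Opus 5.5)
Your proof is correct and follows the paper's argument. You prove the same two inclusions: your delete-and-reinsert construction makes explicit the paper's step of ``repositioning'' the at most $\ell$ letters of $u$ that lie outside a common scattered factor, and your $\supseteq$ direction matches the paper's use of a fixed longest common scattered factor of $x$ and $w$. Your counterexample $w=\ta\tb$, $x=\ta\ta$ is also valid: with $N(w,\ell)$ as literally defined the statement fails, and the paper's proof tacitly uses exactly your amendment when it writes $p(u)\le p(x)=p(w)$.
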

\begin{proof}
    We first prove that $\JScatFact_{\leq k}(w, \ell)\subseteq
    \bigcup_{x\in N(w,\ell)}\ScatFact_{\leq k}(x)$.
    Therefore, let $u\in \JScatFact_{\leq k}(w, \ell)$. 
    Further, there must exist a common scattered factor $v$ of $u$ and $w$
    with length at least $|u|-\ell$.
    Furthermore, since $p(v)\le p(u)\le p(w)$,
    we have $p(u)-p(v)\le p(w)-p(v)$
    where $-$ is the element-wise subtraction operation.
    Therefore,
    for all $\ta\in\Sigma$,
    we can reposition $p(u)[\ta]-p(v)[\ta]$ occurrences of $\ta$ in $w$
    so that $u$ is a scattered factor of the new word.
    The sum of all elements in $p(u)-p(v)$
    is at most $\ell$,
    thus $u\in\bigcup_{x\in N(w,\ell)}\ScatFact_{\leq k}(x)$.

    We now prove that $\JScatFact_{\leq k}(w, \ell)\supseteq
    \bigcup_{x\in N(w,\ell)}\ScatFact_{\leq k}(x)$.
    Let $u\in \bigcup_{x\in N(w,\ell)}\ScatFact_{\leq k}(x)$.
    Then, by definition,
    there exists a word ~$x\in N(w,\ell)$ such that $u$ is a scattered factor of $x$. Since $x\in N(w,\ell)$, we have $\lcsf(x,w)\geq |x|-\ell$. Let $\ell'\leq\ell$ such that $\lcsf(x,w) = |x|-\ell'$, fix a longest common scattered factor $x[i_1]x[i_2]\cdots x [i_{|x|-\ell'}]$ and let $\{j_1,j_2,\ldots,j_{\ell'}\} = [|x|]\setminus \{i_1,\ldots,i_{|x|-\ell'}\}$.
    For any scattered factor $u=x[i'_1]x[i'_2]\cdots x[i'_{|u|}]$ in $x$, the intersection $\{j_1,\ldots, j_{\ell'}\}\cap\{i'_1,\ldots,i'_{|u|}\}$ has at most $\ell'$ elements. Further, the positions not contained in that intersection form a scattered factor of a longest common scattered factor of $x$ and $w$. Consequently, we obtain a common scattered factor of $u$ and $w$ by removing at most $\ell'\leq\ell$ many letters from $u$ and, thus, $u$ is $\ell$-jumbled in $w$.
    Let $v$ be the scattered factor of $u$
    obtained by removing $\ell$ letters from $u$
    including all the letters that
    correspond to the jumbled letters in $x$.
    Then, $v\in \ScatFact(u)$
    with $|v| \geq |u|-\ell$.
    Furthermore, $p(u)\le p(x)=p(w)$,
    and therefore $u\in \JScatFact_{\leq k}(w,\ell)$.
\end{proof}

\begin{lemma}
    For~$w\in \Sigma^n$
    and $\ell\in [n]_0$,
    there are at most
   $\binom{n}{\ell}^2\ell!$
    words in the $\ell$-jumble neighbourhood
    of $w$.
\end{lemma}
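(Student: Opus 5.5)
We bound $|N(w,\ell)|$ by exhibiting a surjection onto it from a set of size at most $\binom{n}{\ell}^2\ell!$. Each $v\in N(w,\ell)$ has length $n$ and satisfies $\lcsf(v,w)\ge n-\ell$. So there is a common scattered factor $s$ of length exactly $n-\ell$: take a longest one and delete letters if necessary, which is possible since $\ell\le n$.

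The encoding of $v$ uses three pieces of data:
\begin{itemize}
\item a set $I\subseteq[n]$ with $|I|=\ell$, namely the positions of $w$ \emph{not} used by a fixed embedding of $s$ into $w$;
\item a set $J\subseteq[n]$ with $|J|=\ell$, namely the positions of $v$ not used by a fixed embedding of $s$ into $v$;
\item a bijection $\pi:J\to I$.
\end{itemize}
The number of such triples is exactly $\binom{n}{\ell}\binom{n}{\ell}\ell!$.

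The key step is to show that $v$ is determined by the triple. Given $(I,J,\pi)$, the word $s$ is recovered as $w$ with the positions in $I$ deleted. Then $v$ is obtained by writing $s$ in order into the positions $[n]\setminus J$ and filling the remaining positions. We need the letters at the positions in $J$ to be determined, so we must make sure the letters of $v$ at $J$ are a rearrangement of the letters of $w$ at $I$.

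In general this fails, because $\lcsf$ alone does not force equal Parikh vectors. Without any constraint, the letters at $J$ could be arbitrary and the bound would carry an extra factor $\sigma^\ell$. For instance, with $w=\ta\ta$ and $\ell=1$, the words $\ta\tb$ and $\tb\ta$ lie in $N(w,1)$ over $\{\ta,\tb\}$, giving $3$ elements against the bound $4$. So small cases do not immediately break the claim, but the counting needs care.

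This is the main obstacle, and I would handle it by reading the neighbourhood as intended by the preceding proposition, where $p(x)=p(w)$ is used. That is, I restrict to $v$ with $p(v)=p(w)$, so that $v$ is obtained from $w$ by repositioning $\ell$ letters. Under this reading the multiset of letters at $J$ equals the multiset of letters at $I$, since $p(v)-p(s)=p(w)-p(s)$. Then $\pi$ specifies which letter of $w$ goes to each position in $J$: set $v[j]:=w[\pi(j)]$.

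The map $(I,J,\pi)\mapsto v$ is therefore well defined and surjective onto the neighbourhood, because every $v$ arises from the choice of $I$, $J$, $\pi$ described above. This gives the bound $\binom{n}{\ell}^2\ell!$.

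If the unrestricted definition must be kept, I would instead state the Parikh assumption explicitly as a hypothesis of the lemma, since the counting argument genuinely relies on it.
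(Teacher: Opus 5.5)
Your proof is correct and is essentially the paper's argument. The paper chooses an ordered list of the $\ell$ letters of $w$ to move ($\binom{n}{\ell}\ell!$ ways) and a multiset of $\ell$ gaps to reinsert them into ($\binom{n}{\ell}$ ways by stars and bars); this is your triple $(I,J,\pi)$, with your set $J$ of target positions in place of the multiset of gaps.

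Your caveat is justified. The paper's proof also tacitly assumes $p(v)=p(w)$, and the literal statement fails without it. For example, $w=\ta\tb$ and $\ell=2$ give $N(w,2)=\Sigma^2\supseteq\{\ta,\tb\}^2$, which has at least $4>2=\binom{2}{2}^2\cdot 2!$ elements.
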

\begin{proof}
    We first choose from $w$ the
    $\ell$ letters to jumble.
    There are at most $\binom{n}{\ell}\ell!$ choices
    in choosing and ordering
    $\ell$ positions of $w$.
    Then, we sample with replacement $\ell$ positions
    in the remaining $n-\ell+1$ gaps in the word
    and insert the letters
    in the chosen places
    in the designated order.
    This gives us $\binom{n-\ell+1+\ell-1}{\ell}=\binom{n}{\ell}$ combinations
    for each of the orderings.
    Therefore, the statement follows.
\end{proof}


We end this section by recalling the jumbled variant of Simon's congruence \cite{DBLP:conf/automata/Simon75}.
This leads to the problem of investigating the jumbled variant $\jsimkl$ of Simon's congruence relation.

\begin{definition}
	Let $v,w \in \Sigma^*$ and $k,\ell \in \N$. We say that $v$ is \emph{$\ell$-jumble $k$-Simon equivalent} to $w$, for short $v \jsimkl w$ iff $\JScatFact_{\leq k}(v, \ell) = \JScatFact_{\leq k}(w, \ell)$. We denote the equivalence class of $w$ under $\jsimkl$ by $[w]_{k,l} = \{v\in\Sigma^\ast\mid v\jsimkl w\}$.
\end{definition}

\begin{problem}\label{problem:simon}
	Given two words $v,w \in \Sigma^*$ and two integers $k,\ell \in \N$ is $v \jsimkl w$?
\end{problem}

%% file: automaton.tex
Although $\jsimkl$ is an extension of Simon's congruence,
it is not a congruence relation.
For example,
consider the words~\texttt{abcd} and \texttt{dabc},
as well as the words~\texttt{efgh} and \texttt{hefg}.
Each pair of words is 1-jumble 4-Simon equivalent.
However, the words~\texttt{abcdefgh}
and \texttt{dabchefg} obtained
by concatenating, respectively the first and second words
in each pair, are not 1-jumble 4-Simon equivalent
due to the counterexample \texttt{cdgh}.
Nevertheless, a natural question is the one of  Problem~\ref{problem:simon}, i.e. testing the $\ell$-jumbled $k$-Simon's equivalence for two given words. 
First, we present combinatorial insights on $\jsimkl$ before considering procedures for testing the equivalence of two words.

\begin{proposition} \label{prop:jumbled-simons-equiv-max-ell-characterisation}
    Let $k \in \N$ and $w, w' \in \Sigma^*$. Then, we get $w \jsim_{k, k-1} w'$ if and only if either $p(w)[i] = p(w')[i]$ or $p(w)[i], p(w')[i] \ge k$ for all $i \in [\sigma]$.
\end{proposition}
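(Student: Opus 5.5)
The plan is first to show that with $\ell = k-1$ the set $\JScatFact_{\leq k}(w,k-1)$ depends only on Parikh vectors. The claim is that $\JScatFact_{\leq k}(w,k-1)=\{u\in\Sigma^{\leq k}\mid p(u)\le p(w)\}$.

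For the inclusion $\subseteq$, membership in $\JScatFact(w,\ell)$ requires $p(u)\le p(w)$ directly by \Cref{def:jidx}.

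For $\supseteq$, take $u$ with $|u|\le k$ and $p(u)\le p(w)$. If $u=\varepsilon$ it is a scattered factor of $w$. Otherwise $u[1]$ occurs in $w$, so $\lcsf(u,w)\ge 1\ge |u|-(k-1)$, and $u$ is $(k-1)$-jumbled in $w$. This is exactly the observation made after \Cref{def:jidx}.

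With this description, $w\jsim_{k,k-1}w'$ holds if and only if the two sets $\{u\in\Sigma^{\leq k}: p(u)\le p(w)\}$ and $\{u\in\Sigma^{\leq k}: p(u)\le p(w')\}$ coincide. This is purely a statement about Parikh vectors.

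For the ``if'' direction, assume that for each $i$ either $p(w)[i]=p(w')[i]$ or both are $\ge k$. Take $u$ with $|u|\le k$ and $p(u)\le p(w)$. Then $p(u)[i]\le k$ for every $i$. In the equal case we get $p(u)[i]\le p(w')[i]$ immediately. In the other case $p(u)[i]\le k\le p(w')[i]$. Hence $p(u)\le p(w')$, and the reverse inclusion follows by symmetry.

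For ``only if'' I argue by contraposition. Suppose some $i$ has $p(w)[i]\neq p(w')[i]$ and not both values are $\ge k$. Without loss of generality $m:=p(w')[i]<p(w)[i]$, and then necessarily $m<k$. Put $u=\ta_i^{m+1}$. We have $|u|=m+1\le k$ and $p(u)\le p(w)$, so $u\in\JScatFact_{\leq k}(w,k-1)$. But $p(u)\not\le p(w')$, so $u\notin\JScatFact_{\leq k}(w',k-1)$, and the equivalence fails.

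The only delicate point is the first step, specifically whether $\varepsilon$ and the edge case $\ell\ge|u|$ are handled correctly under the relaxed definition. Since $\ell=k-1\ge |u|-1$ and $\lcsf(u,w)\ge1$ for nonempty $u$, this goes through. Everything else is a routine comparison of coordinates.
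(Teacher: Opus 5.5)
Your proof is correct. Your ``only if'' direction is essentially the paper's ``$\Rightarrow$'' direction: a unary witness $\ta_i^n$ lies under one Parikh vector but not the other. The paper takes $n=\min\{p(w')[i],k\}$ for the word with more $\ta_i$'s, while you take $n=m+1$ for the smaller count $m$.

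Your ``if'' direction takes a different route. The paper argues by contraposition with a case split. If one word is shorter than $k$, some coordinate differs and is automatically below $k$. If $|w|,|w'|\ge k$, it invokes Lemma~11 of the earlier DLT paper to get a non-shared witness $u$ of length exactly $k$. It then reads off a coordinate with $p(w')[i]<p(u)[i]\le\min\{k,p(w)[i]\}$.

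You instead first prove $\JSF_{\le k}(w,k-1)=\{u\in\Sigma^{\le k}\mid p(u)\le p(w)\}$. This turns the statement into a comparison of Parikh downsets truncated at length $k$, so the ``if'' direction becomes a direct coordinatewise check.

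What this buys you:
\begin{itemize}
\item Your argument is self-contained. It needs neither the case split nor a lemma that the earlier paper formulated for its more restrictive definition of $\ell$-jumbled factors.
\item You make explicit a fact the paper uses tacitly when it passes from $u\notin\JSF_k(w',k-1)$ to $p(u)\not\le p(w')$.
\item You show that the length-$k$ witness is unnecessary: any witness of length at most $k$ would do.
\item You get the subsequent index corollary at once, since your set is determined by the truncation $\min\{p(w),k\}$.
\end{itemize}
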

\begin{proof}
    ``$\Rightarrow$'': We show this by contraposition. Let $k \in \N$ and $w, w' \in \Sigma^*$ such that there exists $i \in [\sigma]$ with w.l.o.g. $p(w)[i] < k$ and $p(w)[i] < p(w')[i]$. Let $u = \ta_i^n$ where $n = \min \{p(w')[i], k\}$. Then, $\len{u} \le k$ and $u \in \JSF_{\le k}(w', k-1)$ but $u \not\in \JSF_{\le k}(w, k-1)$.
    
    ``$\Leftarrow$'': We show this by contraposition. Let $k \in \N$ and $w, w' \in \Sigma^*$ with $w \not\jsim_{k,k-1} w'$. First, we look at the case that $\len{w} < k$ or $\len{w'} < k$. Let w.l.o.g. $\len{w} < k$. Then, there is $i \in [\sigma]$ such that $p(w)[i] \neq p(w')[i]$ since otherwise we would find $p(w) = p(w')$ which implies $w \jsim_{k,k-1} w'$. Next, let $\len{w}, \len{w'} \ge k$. Then, according to \cite[Lemma 11]{DBLP:conf/dlt/FleischmannHKK25}, we find w.l.o.g. $u \in \JSF_k(w,k-1)$ such that $u \not\in \JSF_k(w',k-1)$. Thus, $p(u) \le p(w)$ but $p(u) \not\le p(w')$. So there exists $i \in [\sigma]$ with $p(u)[i] > p(w')[i]$. We get $k = \len{u} \ge p(u)[i] > p(w')[i]$.
\end{proof}

\begin{corollary}
    Let $k \in \N$. We get the index $|\Sigma^* /\mathord{\jsim_{k,k-1}}| = (k + 1)^\sigma$ for the jumbled Simon's equivalence $\jsim_{k,k-1}$.
\end{corollary}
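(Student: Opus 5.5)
The plan is to turn Proposition~\ref{prop:jumbled-simons-equiv-max-ell-characterisation} into an explicit description of the equivalence classes. Define the capped Parikh vector map $\pi_k:\Sigma^\ast\to[k]_0^\sigma$ by $\pi_k(w)[i]:=\min\{p(w)[i],k\}$ for each $i\in[\sigma]$. The characterisation in the proposition says that, coordinatewise, either $p(w)[i]=p(w')[i]$ or both values are at least $k$. This is exactly the statement $\min\{p(w)[i],k\}=\min\{p(w')[i],k\}$.

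The check is a short case split on whether $p(w)[i]$ and $p(w')[i]$ are below $k$. If both are at least $k$, both capped values equal $k$. If exactly one is below $k$, the raw values differ and so do the capped values. If both are below $k$, capping changes nothing.

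From this, $w\jsim_{k,k-1}w'$ if and only if $\pi_k(w)=\pi_k(w')$. The equivalence classes are therefore precisely the nonempty fibres of $\pi_k$. Hence the index $|\Sigma^\ast/\mathord{\jsim_{k,k-1}}|$ equals the size of the image of $\pi_k$.

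It remains to show that $\pi_k$ is surjective onto $[k]_0^\sigma$, which has $(k+1)^\sigma$ elements. Given any $(c_1,\dots,c_\sigma)\in[k]_0^\sigma$, take $w=\ta_1^{c_1}\ta_2^{c_2}\cdots\ta_\sigma^{c_\sigma}$. Then $p(w)[i]=c_i\le k$, so $\pi_k(w)=(c_1,\dots,c_\sigma)$. Together with the previous paragraph, this gives exactly $(k+1)^\sigma$ classes.

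There is no real obstacle. The only delicate point is the case analysis showing that the proposition's ``either/or'' condition coincides with equality of capped vectors, in particular that a coordinate with one value below $k$ and one at least $k$ is correctly excluded.
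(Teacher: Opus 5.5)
Your proof is correct and follows the same route as the paper. The paper likewise reads off from \Cref{prop:jumbled-simons-equiv-max-ell-characterisation} that the classes are determined by the capped Parikh vector $\min\{p(w),k\}$ and then counts $|[k]_0^\sigma|=(k+1)^\sigma$; you spell out two steps the paper leaves implicit, namely the coordinatewise case check and the surjectivity witnessed by $\ta_1^{c_1}\cdots\ta_\sigma^{c_\sigma}$.
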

\begin{proof}
    From \Cref{prop:jumbled-simons-equiv-max-ell-characterisation}, we know that the equivalence classes of $\jsim_{k,k-1}$ are characterised by $\min \{p(w), k\}$ for any $w \in \Sigma^*$. Thus, we have that $|\Sigma^* / \mathord{\jsim_{k,k-1}}| = |[k]_0^\sigma| = (k+1)^\sigma$.
\end{proof}

Previously, Fleischmann et al.~\cite{DBLP:conf/dlt/FleischmannHKK25}
discovered equivalence classes of size $1, |\Pi(w)|$, and $\infty$ where $\Pi(w)$ denotes the set of words with the same Parikh vector as some representative $w$.
We prove that there are other equivalence classes
whose cardinality does not belong in one of the three cases
by the following example.

\begin{example}
    The equivalence class $[\mathtt{acbab}]_{3,1}$ is an example for a class that has finite size which is not the number of permutations of $\mathtt{acbab}$. 
    One may verify that 
    \begin{align*}
    [\mathtt{acbab}]_{3,1} =& (\mathtt{abab} \shuffle \tc) \cup (\mathtt{abba} \shuffle \tc) \cup (\mathtt{baab} \shuffle \tc) \cup (\mathtt{baba} \shuffle \tc) \\ &\cup \{\mathtt{acabb}, \mathtt{aabcb}, \mathtt{bcbaa}, \mathtt{bbaca}\}.
    \end{align*}

    First, we justify why all of those words have the same $1$-jumbled scattered factors of length $3$. Every word produced by a shuffle is $2$-universal w.r.t. $\ta$ and $\tb$, i.e. they contain every word of length $3$ that has a component-wise smaller Parikh vector than $(2,2,1)$. The same holds for the four words in the second line due to the placement of $\tc$. 
    Second, we justify that there are no other words contained in this set. Any word with more occurrences of  $\ta, \tb$ or $\tc$ does contain $\ta^3, \tb^3$, or $\tc^2$ as a scattered factor. Further, any other position of $\tc$ than the four within the second row would lead to fewer jumbled scattered factors, e.g. in shifting the $\tc$ in $\mathtt{acabb}$ or $\mathtt{aabcb}$ we are losing $\mathtt{bac}$ when shifting $\tc$ to the first position, $\mathtt{bca}$ when shifting into the middle, and $\mathtt{cba}$ when appending $\tc$.
\end{example}

\subsection{Equivalence testing for small $k$}

\begin{proposition}
    Let $k,\ell\in \N$. For all $w,v\in \Sigma^*$, we can decide whether $w \jsimkl v$ holds in time $O(|\Sigma|^{2k+3}k^2)$.
\end{proposition}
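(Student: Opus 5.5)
The plan is to reduce the test to finitely many membership questions about the words $u\in\Sigma^{\le k}$, and to answer each question with a small finite automaton. Since $w\jsimkl v$ holds exactly when $\JSF_{\le k}(w,\ell)$ and $\JSF_{\le k}(v,\ell)$ agree, it suffices to decide, for every $u\in\Sigma^{\le k}$, whether $u$ is $\ell$-jumbled in $w$ and whether it is $\ell$-jumbled in $v$, and compare. There are $\sum_{i\le k}|\Sigma|^i\le |\Sigma|^{k+1}$ candidates $u$. I would therefore aim for a per-candidate cost of $O(|\Sigma|^{k+2}k^2)$.

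For a fixed $u$, I would build a deterministic automaton $\mathcal A_u$ over $\Sigma$ recognising $L_u=\{x\in\Sigma^\ast\mid u\in\JSF(x,\ell)\}$. By \Cref{def:jidx}, membership of $x$ needs two conditions. The first is $p(u)\le p(x)$; a component counting $|x|_{\ta}$ capped at $|u|_{\ta}\le k$ checks this. The second is $\lcsf(u,x)\ge|u|-\ell$. For this I track the standard LCS dynamic-programming row of $u$ against the prefix of $x$ read so far, namely the vector $(\lcsf(u[1..j],x'))_{j\le|u|}$. This row is monotone, has unit increments, and is bounded by $k$, so it is determined by a $0/1$ increment vector of length at most $k$. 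Updating the row on a letter costs $O(k)$ time.

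The main obstacle is bounding the size of $\mathcal A_u$ so that its construction stays within $O(|\Sigma|^{k+2}k^2)$. A naive product of the capped counters with the DP-row component does not meet this. First I would show that the capped counters are needed only for the at most $k$ letters occurring in $u$. Second, I would argue that the relevant information collapses to a set of at most $O(|\Sigma|^{k+1})$ reachable states, each with $|\Sigma|$ outgoing transitions computed in $O(k^2)$. My first attempt would be to index states by the shortest prefix of $x$, up to the relevant equivalence, that reaches them. If this collapse fails, the fallback is to omit the automaton and run the $O(k)$-per-letter update directly along $w$ and $v$. This changes the bound by a factor linear in $|w|+|v|$, which I would then hope to absorb by a preprocessing step that canonically shortens $w$ and $v$ modulo $\jsimkl$.

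Finally, I would simulate $w$ and $v$ on each $\mathcal A_u$, and answer yes if and only if acceptance agrees for every $u\in\Sigma^{\le k}$. Correctness follows directly from \Cref{def:jidx} and the definition of $\jsimkl$. Multiplying the $|\Sigma|^{k+1}$ candidates by the per-candidate automaton cost gives the stated $O(|\Sigma|^{2k+3}k^2)$.
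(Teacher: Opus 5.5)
Your correctness argument is fine, and your explicit test of $p(u)\le p(x)$ is genuinely needed. The paper's Algorithm 1 omits it: for $k=\ell=1$ and $u=\ta$ its set $J$ contains $\varepsilon$, so it declares $w=\ta$ and $v=\tb$ equivalent, although $\ta\in\JSF_{\le 1}(\ta,1)\setminus\JSF_{\le 1}(\tb,1)$.

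The gap is in the running-time analysis, which is the whole content of the statement. The step meant to give $O(|\Sigma|^{k+2}k^2)$ per candidate is that $\mathcal A_u$ has only $O(|\Sigma|^{k+1})$ reachable states, and this is announced but not argued. Indexing states by a shortest prefix reaching them bounds nothing unless you already know those prefixes are short. This part is cheap to repair. Keep the Parikh test out of the automaton: compute $p(w)$ and $p(v)$ once and compare them with $p(u)$. A state is then just the LCS row, i.e.\ a $0/1$ increment vector of length $|u|$. That gives at most $2^{|u|}\le|\Sigma|^{|u|}$ states for $|\Sigma|\ge 2$, each with $|\Sigma|$ transitions computable in $O(k)$.

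The more serious problem is the final multiplication. It counts only the construction of $\mathcal A_u$. It ignores that $w$ and $v$ must then be run through every $\mathcal A_u$, at cost $\Theta(|w|+|v|)$ per candidate. So the automaton does not remove the dependence on input length that your fallback makes explicit.

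Your remedy, a canonical shortening of $w$ and $v$ modulo $\jsimkl$, is not established. You give no normal form, and since $\jsimkl$ is not a congruence (the $\mathtt{abcd}/\mathtt{dabc}$ example), you cannot get one by replacing factors with $\jsimkl$-equivalent ones. More fundamentally, no algorithm can achieve a bound independent of $|w|+|v|$. For $k=\ell=1$, deciding $w\jsim_{1,1}v$ amounts to deciding $\al(w)=\al(v)$, which requires reading both words.

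For comparison, the paper does not use automata at all. For each $u$ it brute-forces the set $J$ of all $u'$ with $|u'|\le|u|$, $p(u')\le p(u)$ and $\lcsf(u,u')\ge|u|-\ell$. This is the $O(|\Sigma|^{|u|+2}|u|^2)$ step that produces the exponent $2k+3$. It then tests each $u'\in J$ in linear time for being a scattered factor of $w$ and of $v$. That second phase costs $O(|\Sigma|^{2k+2}(|w|+|v|+k))$ in total, and the paper's last sentence silently drops this term too.

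So an honest bound carries an additive term linear in $|w|+|v|$, and you should state your result that way. With the repair above, your route gives $O(|\Sigma|^{2k+3}k^2+|\Sigma|^{k+1}(|w|+|v|))$ for $|\Sigma|\ge 2$. That is no worse than what the paper's enumeration actually achieves.
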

\begin{proof}
    Consider \Cref{alg:fixedkl} as an algorithm solving \Cref{problem:simon}.
    \begin{algorithm}[]\caption{Jumbled Simon's equivalence for small $k$}\label{alg:fixedkl}
\begin{algorithmic}
    \State Given: two strings $w,v \in \Sigma^*$, two integers~$k,\ell \in \N$
    \For{all strings $u \in \Sigma^{\leq k}$}
        \State $J\gets \emptyset$ \Comment{The set of strings obtained using $\le \ell$ jumbles on $u$}
        \For{all strings $u'$ of length at most $|u|$}
            \If{$p(u')\le p(u)$ and $\lcsf(u,u')\ge |u|-\ell$}
                add $u'$ to $J$
            \EndIf
        \EndFor
        \State $B_w\gets $ False, $B_v\gets$ False
        \For{all strings $u'\in J$}
            \If{$u'$ is a scattered factor of $w$}
                $B_w\gets$ True
            \EndIf
            \If{$u'$ is a scattered factor of $v$}
                $B_v\gets$ True
            \EndIf
        \EndFor
        \If{$B_w\ne B_v$}
            \Return False
        \EndIf
    \EndFor
    \State \Return True
\end{algorithmic}
\end{algorithm}

Correctness: we want to test $v\jsim_{k,\ell} w$ which is satisfied if each string $u$ of length $|u|\leq k$ is either $\ell$-jumbled scattered factor of both $v$ and $w$ or of neither one. Further, $u$ is $\ell$-jumbled in $v$ if and only if $p(u) \le p(v)$ and $\lcsf(u,v)\geq |u|-\ell$. Algorithm 1 tries to find a witness for $v\njsim_{k,\ell} w$ by simply testing for each string $u$ of length up to $k$ whether it is $\ell$-jumbled scattered factor of $v$ and not of $w$ or vice-versa. This is done by checking each possible string $u'$ whether it has a common scattered factor with $u$ of length at least $|u|-\ell$ while also being a scattered factor of $v$ or, respectively, $w$. Since $u'$ being a scattered factor of $v$ implies $\lcsf(v,u)\geq \lcsf(u',u)$ the existence of such string $u'$ witnesses $u$ being an $\ell$-jumbled scattered factor of $v$. Conversely, if $u$ is an $\ell$-jumbled scattered factor of $v$, by Definition~\ref{def:jidx}, there has to be a common scattered factor $u'$ of $u$ and $v$ of length $|u'|\geq |u|-\ell$, hence, $u'$ satisfies $\lcsf(u',u) = |u'| \geq |u|-\ell$ and $u'$ is a scattered factor of $v$. Accordingly, after each run of the outer for-loop $B_v$ correctly indicates whether $u$ is $\ell$-jumbled scattered factor of $v$, analogously $B_w$ correctly indicates whether $u$ is $\ell$-jumbled scattered factor of $w$. Now $u$ is a witness for $v\njsim_{k,\ell} w$ if an only if $B_v\neq B_w$. Finally, if none of the strings of length up to $k$ witnesses the non-equivalence of $v$ and $w$ then Algorithm $1$ correctly outputs that $u\jsim_{k,\ell} w$ is true.

Complexity:
for each $u$,
the construction of the set $J$ takes $O(|\Sigma|^{|u|+2}|u|^2)$ time,
since the computation of $\LCS(u,u')$ takes $O(|u||u'|)$ time
and the construction of the Parikh vector takes $O(|\Sigma|+|u|)$ time
for each choice of $u'$.
There are $O(|\Sigma|^{|u|+1})$ choices of $u'$,
and since $|u|\le k$, the bound follows.
Then, in the enumeration of the strings~$u'$ in $J$,
we spend $O(|w|+|v|+2|u'|)$ time for each choice of $u'$ in $J$.
Since there are at most $O(|\Sigma|^{|u|+1})$ strings in $J$,
we spend at most $O(|\Sigma|^{|u|+1}(|w|+|v|+k))$ in the enumeration 
of the strings in $J$.
Finally, there are at most $O(|\Sigma|^{|u|+1})$ choices of $u$;
therefore the overall running time of the algorithm
is $O(|\Sigma|^{2k+3}k^2)$.
\end{proof}

\begin{corollary}
    If $k$ is in
    $O(\log_{|\Sigma|}(|w|+|v|))$,
    then we can
    decide whether or not
    $w \jsimkl v$ holds
    in polynomial time $O(|\Sigma|^3(|w|+|v|)^2k^2)$.
\end{corollary}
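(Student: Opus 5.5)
The corollary follows from the preceding proposition once the running time of \Cref{alg:fixedkl} is written out in full and the hypothesis on $k$ is substituted. The approach is to redo the complexity accounting of that proof while keeping the dependence on $|w|+|v|$ explicit, and then to bound the exponential factor $|\Sigma|^{k}$ by a polynomial in $|w|+|v|$.

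\textbf{Step 1: make the bound explicit.} The proof of the proposition actually shows two costs per outer string $u$. Building $J$ costs $O(|\Sigma|^{k+2}k^2)$. Enumerating $J$ and testing each $u'$ for being a scattered factor of $w$ and of $v$ costs $O(|\Sigma|^{k+1}(|w|+|v|+k))$. There are $O(|\Sigma|^{k+1})$ outer strings $u$, so the total time is
\[
O\bigl(|\Sigma|^{k+1}\cdot(|\Sigma|^{k+2}k^2+|\Sigma|^{k+1}(|w|+|v|+k))\bigr)\subseteq O\bigl(|\Sigma|^{2k+3}k^2(|w|+|v|)\bigr).
\]
Equivalently, this is $|\Sigma|^3(|\Sigma|^{k})^2$ times a factor that is polynomial in $k$ and $|w|+|v|$.

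\textbf{Step 2: substitute the hypothesis.} Choose $k\le \log_{|\Sigma|}(|w|+|v|)+c$ for a constant $c$. Then $|\Sigma|^{k}\le |\Sigma|^{c}(|w|+|v|)$, and hence $(|\Sigma|^k)^2\in O((|w|+|v|)^2)$, with the constant $|\Sigma|^{2c}$ absorbed into the $O$. Substituting into Step~1 gives $O(|\Sigma|^3(|w|+|v|)^2k^2)$ times the leftover factor, which is polynomial in the input size.

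\textbf{Main obstacle.} The delicate point is the bookkeeping of that leftover linear factor $|w|+|v|$ from the scattered-factor tests. Taken literally, it would give a cubic bound, not the stated $O(|\Sigma|^3(|w|+|v|)^2k^2)$. I would handle it by observing that in the proposition the authors already fold the linear test cost into the factor $k^2$ and the exponent $2k+3$, and I would follow their accounting consistently. If that is not justified, the fallback is to precompute next-occurrence tables for $w$ and $v$ in $O(|\Sigma|(|w|+|v|))$ time. Each scattered-factor test then costs $O(|u'|)\subseteq O(k)$, which removes the $|w|+|v|$ factor from the inner loop and yields exactly the claimed bound.

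In every case the conclusion is a polynomial-time algorithm, which is the substantive claim.
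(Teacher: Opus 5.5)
Your argument is correct once you commit to the fallback, and on the running time it is more careful than the paper. The paper gives no separate proof. The corollary is meant to follow by writing $|\Sigma|^{2k+3}k^2=|\Sigma|^3(|\Sigma|^k)^2k^2$ and using $|\Sigma|^k\in O(|w|+|v|)$. You rightly notice that the proposition's own accounting does not give $O(|\Sigma|^{2k+3}k^2)$. It gives $O(|\Sigma|^{2k+3}k^2+|\Sigma|^{2k+2}(|w|+|v|))$, and after the substitution the second term is of order $|\Sigma|^2(|w|+|v|)^3$, which is not within the claimed bound. Your next-occurrence tables are built once in $O(|\Sigma|(|w|+|v|))$ time, after which each scattered-factor test costs $O(k)$. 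This brings the total to $O(|\Sigma|^{2k+3}k^2+|\Sigma|(|w|+|v|))\subseteq O(|\Sigma|^3(|w|+|v|)^2k^2)$. So your route gives a bound that actually holds, while the paper's one-line substitution is only shorter. Make the fallback the proof and delete the first option. The factor $|w|+|v|$ cannot be folded into $k^2\in O(\log^2(|w|+|v|))$ or into one extra power of $|\Sigma|$ in the exponent.

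Two smaller points. First, state the hypothesis you actually use, namely $|\Sigma|^k\in O(|w|+|v|)$ (e.g.\ $k\le\log_{|\Sigma|}(|w|+|v|)$). The literal reading $k\le c\log_{|\Sigma|}(|w|+|v|)$ only yields a factor $(|w|+|v|)^{2c}$, which is polynomial but not the stated bound. With your additive constant $c$, the factor $|\Sigma|^{2c}$ is not a constant, since the bound keeps track of $|\Sigma|$. Second, everything rests on Algorithm~\ref{alg:fixedkl}, and its flag $B_w$ is true exactly when $\lcsf(u,w)\ge|u|-\ell$. The condition $p(u)\le p(w)$ from Definition~\ref{def:jidx} is never tested. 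For $w=\ta\tb$, $v=\ta\ta\tb$, $k=2$, $\ell=1$ the algorithm returns True, although $\ta\ta\in\JSF_{\le 2}(v,1)\setminus\JSF_{\le 2}(w,1)$. The fix is to compare, for each $u$, whether $p(u)\le p(w)$ and $p(u)\le p(v)$ agree. This costs $O(|\Sigma|+k)$ per $u$ after computing $p(w)$ and $p(v)$ once, so it restores correctness without affecting your bound.
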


\subsection{An automaton-based approach}

We present an automata approach for testing
$\ell$-jumbled $k$-Simon equivalence
between two strings.
However, the following result
makes it unlikely to have
an efficient solution by constructing an automaton
for all jumbled scattered factors of a given word for the case of general $k$ and $\ell$.

\begin{lemma}\label{prop:exp}
    Let $w \in \Sigma^*$ satisfying $\al (w) = \Sigma$.
    For $k,\ell \in \N$ satisfying
    $\ell= k = |w|$,
    the size of the minimal NFA recognising
    $\JScatFact_{\leq k}(w, \ell)$
    is exponential in the size of $\Sigma$.
\end{lemma}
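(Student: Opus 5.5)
Since $\ell = k = |w|$, the condition $\lcsf(u,w)\ge |u|-\ell$ holds for every $u$ of length at most $k$. Moreover, $p(u)\le p(w)$ already forces $|u|\le |w| = k$. So the first step is to observe that
\[
\JScatFact_{\leq k}(w,\ell) = L_w := \{u\in\Sigma^\ast \mid p(u)\le p(w)\}.
\]
The plan is to lower-bound the number of states of any NFA for $L_w$ by a fooling-set argument. Write $c_i := p(w)[i]$. Because $\al(w)=\Sigma$, we have $c_i\ge 1$ for all $i\in[\sigma]$.

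For every subset $S\subseteq[\sigma]$ I define two words. Let $x_S$ be the concatenation, in increasing order of $i\in S$, of the blocks $\ta_i^{c_i}$. Let $y_S$ be the analogous concatenation over $i\in[\sigma]\setminus S$. This gives $2^\sigma$ pairs $(x_S,y_S)$ with the following two properties.
\begin{enumerate}
  \item For each $S$, we have $p(x_Sy_S)=p(w)$, so $x_Sy_S\in L_w$.
  \item For $S\neq T$, pick w.l.o.g.\ $i\in S\setminus T$ (otherwise swap the roles of $S$ and $T$). Then $|x_Sy_T|_{\ta_i}=2c_i>c_i$, so $x_Sy_T\notin L_w$. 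Here $c_i\ge 1$ is used.
\end{enumerate}

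Next, apply the standard fooling-set lemma, proved directly. Let $A$ be any NFA for $L_w$. For each $S$, fix an accepting run of $A$ on $x_Sy_S$ and let $q_S$ be the state reached after reading $x_S$. Suppose $q_S=q_T$ for some $S\neq T$ with $i\in S\setminus T$. Then the run on $x_S$ up to $q_S$ can be concatenated with the accepting continuation of $y_T$ from $q_T$. This yields an accepting run on $x_Sy_T\notin L_w$, a contradiction. Hence the states $q_S$ are pairwise distinct, and $A$ has at least $2^\sigma$ states, which is exponential in $|\Sigma|$.

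The only delicate point is the initial reduction to $L_w$, i.e.\ checking that the jumble constraint is vacuous when $\ell=k=|w|$. The choice of $x_S$ as full letter blocks $\ta_i^{c_i}$ is what makes the fooling set work for arbitrary multiplicities in $w$, rather than only when each letter occurs once. I expect the rest to be routine.
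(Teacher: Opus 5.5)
Your proof is correct and follows the paper's argument: a fooling set of $2^{|\Sigma|}$ pairs indexed by subsets of $\Sigma$. In it, crossing two different pairs duplicates some letter and so violates $p(u)\le p(w)$. The one difference is that the paper only treats the specific word $w=w(\Sigma)$, in which every letter occurs exactly once, whereas your blocks $\ta_i^{c_i}$ make the argument work for every $w$ with $\al(w)=\Sigma$, which is what the lemma as stated quantifies over.
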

\begin{proof}
    We prove a lower bound on the number of states
    of an NFA using a fooling set argument (see e.g. \cite{DBLP:journals/ipl/Birget92,DBLP:journals/ijfcs/HolzerK09}). Recall the fooling set argument: let $L \subseteq \Sigma^*$ be a regular language and suppose there exists a set of pairs $\mathcal F := \{(x_i,y_i) \mid 1 \leq i \leq n\}\subset \Sigma^\ast\times\Sigma^\ast$, called the \emph{fooling set}, such that
    \begin{enumerate}
        \item $x_i y_i \in L$ for $1 \leq i \leq n$, and
        \item $i \neq j$ implies $x_i y_j \notin L$ or $x_j y_i \notin L$ for $1 \leq i,j \leq n$.
    \end{enumerate}
    Then any NFA which recognises $L$ has at least $\len{\mathcal F}$ many states.
    
    
    For a subset~$S$ of $\Sigma$,
    let us define the word~$w(S)$ of length~$|S|$
    that satisfies $\len {w(S)}_{\ta}=1$
    whenever $\ta\in S$
    and $\len{w(S)}_{\ta}=0$ otherwise.
    Let us choose the string~$w=w(\Sigma)$
    and $\ell=k=\len w$.
    Then $\JSF_k(w,\ell)$ contains all permutations
    of $w$.
    Consider the fooling set
    that consists of the pairs~$(w(S),w(\Sigma\setminus S))$
    for each subset~$S$ of $\Sigma$.
    For any two distinct subsets~$S_1,S_2$
    of $\Sigma$,
    we prove that either $w(S_1)w(\Sigma\setminus S_2)$
    or $w(S_2)w(\Sigma\setminus S_1)$
    is not in the language.
    Without loss of generality,
    there exists a letter~$\ta\in \Sigma$
    that appears in $S_1$ but not in $S_2$.
    Then, the word~$w(S_1)w(\Sigma\setminus S_2)$
    contains two instances of $\ta$,
    and therefore it is not in $\JSF_k(w,\ell)$.
    Therefore, there are at least
    $2^{|\Sigma|}$ elements
    in a fooling set for $\JSF_k(w,\ell)$,
    and thus the smallest NFA
    recognizing $\JSF_k(w,\ell)$
    has size exponential in $|\Sigma|$.
\end{proof}

\Cref{prop:exp} implies that, for arbitrary $k,\ell$, an automaton which recognises all $\ell$-jumbled scattered factors of length at most $k$ of a word can have size $O(2^{\len\Sigma})$. Since the Parikh-vector can easily be checked algorithmically we focus on automata to accept the set of all words admitting a longest common scattered factor of sufficient length.
As a useful tool, we define the X-ranker~$\XYranker{X}{w}{i}{\ta}:=\min(\{j> i\mid w[j]=\ta\}\cup\{|w|+1\})$~\cite{SchwentickTV01} that outputs the next occurrence of the letter $\ta$ after the \nth{$i$} index in $w$ or $|w|+1$ if the letter is not occurring any more. For example, $\XYranker{X}{\mathtt{abaab}}{0}{\ta}=1$
since $\mathtt{abaab}[1]=\ta$,
and $\XYranker{X}{\mathtt{abaab}}{1}{\ta}=3$
since $\mathtt{abaab}[3]$ is the next occurrence of $\ta$
after the first $\ta$.

For the first approach, we characterise non-equivalent words by their longest common scattered factors.
For a string~$u$ and an integer~$k$,
let $p(u,k)$ be the vector of size~$|\Sigma|$
where $p(u,k)[\sigma]=\min(k,|u|_\sigma)$.

\begin{lemma}\label{lem:simklCharacterize}
    For $w_1,w_2\in \Sigma^*$
    and two integers~$k,\ell \in \N$,
    we have $w_1 \not \jsimkl w_2$
    if and only if $p(w_1,k)\ne p(w_2,k)$,
    or there exists a string~$u$ of length at most~$k$
    with either $\lcsf(u,w_1)<|u|-\ell\le \lcsf(u,w_2)$
    or $\lcsf(u,w_2)<|u|-\ell\le \lcsf(u,w_1)$.
\end{lemma}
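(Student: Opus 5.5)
We prove both directions of the equivalence directly from \Cref{def:jidx}: $u\in\JSF_{\le k}(w,\ell)$ iff $|u|\le k$, $p(u)\le p(w)$ and $\lcsf(u,w)\ge|u|-\ell$. So membership of $u$ in $\JSF_{\le k}(w_1,\ell)$ versus $\JSF_{\le k}(w_2,\ell)$ can differ for only two reasons: the Parikh condition differs, or the $\lcsf$ condition differs. The plan is to show that the first reason is captured exactly by $p(w_1,k)\ne p(w_2,k)$, and the second by the existence of a $u$ as in the statement.

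\emph{``If'' direction.} First suppose $p(w_1,k)\ne p(w_2,k)$. Then there is a letter $\ta$ with, w.l.o.g., $|w_1|_\ta<|w_2|_\ta$ and $|w_1|_\ta<k$. Take $u=\ta^{n}$ with $n=\min(|w_2|_\ta,k)$, exactly as in \Cref{prop:jumbled-simons-equiv-max-ell-characterisation}. Since $u$ is unary, $\lcsf(u,w_2)=n=|u|$, so $u\in\JSF_{\le k}(w_2,\ell)$. On the other hand $p(u)\not\le p(w_1)$, so $u\notin\JSF_{\le k}(w_1,\ell)$.

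Now suppose instead that $p(w_1,k)=p(w_2,k)$ and some $u$ with $|u|\le k$ satisfies, w.l.o.g., $\lcsf(u,w_1)<|u|-\ell\le\lcsf(u,w_2)$. We may not have $p(u)\le p(w_2)$, so we pass to a common scattered factor. Let $v\in\LCSF(u,w_2)$ and set $u'=v$. Then $p(u')\le p(w_2)$, and $\lcsf(u',w_2)=|u'|$, so $u'\in\JSF_{\le k}(w_2,\ell)$. This choice does not yet contradict membership in $w_1$, since it only gives $\lcsf(u',w_1)\le\lcsf(u,w_1)$, not a deficiency. So instead we keep $u$ and repair the Parikh vector. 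Every letter $\ta$ has $|u|_\ta\le|u|\le k$. If $|u|_\ta>|w_2|_\ta$, then $|w_2|_\ta<k$, hence by $p(w_1,k)=p(w_2,k)$ we get $|w_1|_\ta=|w_2|_\ta$. Delete from $u$ the surplus occurrences of such letters, choosing occurrences that are not used in a fixed longest common scattered factor with $w_2$. This is possible because that common scattered factor uses at most $|w_2|_\ta$ copies of $\ta$. Call the result $\hat u$. Each deletion lowers $|u|$ by one, leaves $\lcsf(\cdot,w_2)$ unchanged, and lowers $\lcsf(\cdot,w_1)$ by at most zero. Hence $\lcsf(\hat u,w_1)<|\hat u|-\ell\le\lcsf(\hat u,w_2)$ and $p(\hat u)\le p(w_2)$. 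Thus $\hat u\in\JSF_{\le k}(w_2,\ell)\setminus\JSF_{\le k}(w_1,\ell)$.

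\emph{``Only if'' direction.} Suppose $w_1\njsim_{k,\ell}w_2$ and $p(w_1,k)=p(w_2,k)$. Pick, w.l.o.g., $u\in\JSF_{\le k}(w_2,\ell)\setminus\JSF_{\le k}(w_1,\ell)$. Membership gives $p(u)\le p(w_2)$ and $\lcsf(u,w_2)\ge|u|-\ell$. Since $|u|_\ta\le k$ for every letter, $p(u)\le p(w_2)$ together with equal $k$-capped vectors gives $p(u)\le p(w_1)$. Therefore the failure of membership in $w_1$ must come from $\lcsf(u,w_1)<|u|-\ell$, which yields the required $u$.

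The main obstacle is the Parikh repair step in the ``if'' direction. The deletions have to preserve the strict gap, and this relies on two facts: the chosen deleted positions avoid a fixed longest common scattered factor with $w_2$, and $\lcsf$ with $w_1$ cannot increase when letters are deleted. I would verify both carefully.
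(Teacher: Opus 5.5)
Your ``only if'' direction and the first case of the ``if'' direction are correct and coincide with the paper's argument. The gap is in the Parikh-repair step. From the facts that each deletion lowers $|u|-\ell$ by one, keeps $\lcsf(\cdot,w_2)$ fixed, and cannot increase $\lcsf(\cdot,w_1)$, you conclude $\lcsf(\hat u,w_1)<|\hat u|-\ell$. That does not follow. To keep the strict inequality you would need $\lcsf(\cdot,w_1)$ to drop by at least one with \emph{every} deletion, and nothing forces that. The two facts you plan to verify are both true, but they are not sufficient.

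No repair is possible, because the ``if'' direction of the statement is false as written. Take $w_1=\ta\tb$, $w_2=\tb\ta$, $k=3$, $\ell=1$ and $u=\tb\tb\ta$.
\begin{itemize}
\item The right-hand side holds: $p(w_1,3)=p(w_2,3)=(1,1)$, $|u|=3\le k$, and $\lcsf(u,w_1)=1<2=|u|-\ell\le 2=\lcsf(u,w_2)$.
\item The left-hand side fails: every $y$ with $p(y)\le(1,1)$ has length at most $2$ and shares a letter with each $w_i$ if nonempty, so $\lcsf(y,w_i)\ge|y|-1$. Hence $\JSF_{\le 3}(w_1,1)=\JSF_{\le 3}(w_2,1)=\{\varepsilon,\ta,\tb,\ta\tb,\tb\ta\}$, that is, $w_1\jsim_{3,1}w_2$.
\end{itemize}
On this instance your repair deletes the surplus $\tb$ and yields $\hat u=\tb\ta$. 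Then $\lcsf(\hat u,w_1)=1=|\hat u|-\ell$, so the strict gap is lost exactly where your inference breaks, and $\hat u$ lies in both sets.

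The paper's proof makes the same slip more bluntly. It asserts that $\lcsf(u,w_2)\ge|u|-\ell$ makes $u$ an $\ell$-jumbled scattered factor of $w_2$ ``by definition'', ignoring the condition $p(u)\le p(w_2)$. So you located the weak point correctly.

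What is true, and what your argument proves, is the version in which the witness $u$ in the second disjunct must also satisfy $p(u)\le p(w_1)$ and $p(u)\le p(w_2)$. With that added, the ``if'' direction is immediate from \Cref{def:jidx}. Your ``only if'' argument already produces a witness with these properties, since it derives $p(u)\le p(w_1)$ from $p(u)\le p(w_2)$, $|u|\le k$, and $p(w_1,k)=p(w_2,k)$.
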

\begin{proof}
    ``$\Leftarrow$'': Let us first consider the case
    where $p(w_1,k)\ne p(w_2,k)$.
    Choose a character~$\ta$ with $p(w_1,k)[\ta]\ne p(w_2,k)[\ta]$
    and let $v=\ta^{x}$
    where $x$ is the maximum between
    $p(w_1,k)[\ta]$ and $p(w_2,k)[\ta]$.
    Note that $|v|\le k$,
    and only one of $w_1$ and $w_2$ contains $v$
    as an $\ell$-jumbled scattered factor.
    Therefore, we have $w_1 \not \jsimkl w_2$.
    For the other case when
    there exists a string~$u$ of length at most~$k$
    with either $\lcsf(u,w_1)<|u|-\ell\le \lcsf(u,w_2)$
    or $\lcsf(u,w_2)<|u|-\ell\le \lcsf(u,w_1)$,
    let us assume without loss of generality
    that $\lcsf(u,w_1)<|u|-\ell\le \lcsf(u,w_2)$.
    By the definition of the jumbled scattered factor
    of a string,
    $u$ is an $\ell$-jumbled scattered factor of $w_2$
    but not for $w_1$.
    Therefore, we have $w_1 \not \jsimkl w_2$.

    ``$\Rightarrow$'': Let $w_1 \not \jsimkl w_2$;
    then there exists a string~$u$ of length at most $k$
    that is an $\ell$-jumbled scattered factor of one string
    but is not of the other.
    Without loss of generality,
    let $u$ be an $\ell$-jumbled scattered factor of $w_2$;
    then we have $p(u)\le p(w_2)$.
    First, consider the case when
    $|u|_\ta > |w_1|_\ta$
    for some $\ta\in\Sigma$.
    Then, since $p(u)[\ta]\le k$
    and $p(u)[\ta] \le p(w_1)[\ta]$,
    we have
    $p(w_1,k)[\ta]\ne p(w_2,k)[\ta]$.
    Otherwise,
    we have $|u|_\ta \le |w_1|_\ta$
    for all $\ta\in\Sigma$.
    Then, we have $p(u)\le p(w_1,k)\le p(w_1)$.
    In other words, for $u$ to not be an $\ell$-jumbled scattered factor
    of $w_1$,
    the length of the longest common scattered factor of $u$ and $w_1$
    should be of length strictly less than $|u|-\ell$.
    Since $u$ is an $\ell$-jumbled scattered factor of $w_2$,
    we have $|u|-\ell \le \lcsf(u,w_2)$,
    and thus the claim holds.
\end{proof}

We now use \cref{lem:simklCharacterize} for constructing a DFA dependent on some word $w \in \Sigma^*$ that accepts all words of length at most $k$ that share a common scattered factor of length $k-\ell$ with $w$. The idea is to compare the DFAs for different words to decide equivalence. Therefore, we define the set of words that share a longest common scattered factor of length at least $k-\ell$ with a given word $w$.

\begin{definition}
    For $w \in \Sigma^*$ and $k,\ell \in \N_0$, we define $\LCS(w,k,\ell):=\{u\in\Sigma^{k}\mid \lcsf(u,w)\ge k-\ell\}$.
\end{definition}

Given a string~$w$
and two integers~$k$ and $\ell$,
we construct a DFA~$A$ that recognises
$\LCS(w,k,\ell)$.
The idea is to simulate the classic LCS dynamic programming algorithm (cf., e.g. \cite{DBLP:books/cu/Gusfield1997})
one row at a time,
where each row corresponds to a character
of the input
and each column corresponds to a character in $w$.
We leverage two important observations.
First, the values of each cell can grow up to $k-\ell$,
and if any value exceeds $k-\ell$,
then we should automatically accept the input.
Second, the values in each row
are non-decreasing.
Therefore, it suffices to remember
a non-decreasing sequence of $k-\ell$ positions of $w$,
where the $i^\text{th}$ position
is the index of the smallest cell
in the current row of the DP table
with value greater than $i$.
We formally construct the automaton~$A:=(Q_A,\Sigma,\delta_A,s_A,F_A)$ as follows:
\begin{itemize}
    \item $Q_A:=[k]_0\times \{\vect{V}\in ([|w|+1]_0)^{k-\ell}\mid
    \forall i\in [k-\ell]:\vect{V}[i-1]\le \vect{V}[i] \}$,
    \item $s_A:=(0,[0,|w|+1,\ldots,|w|+1])$,
    \item for $x < k$ and $\vect{V}\in ([|w|+1]_0)^{k-\ell}$,
    the transition function is defined as~$\delta_A((x,\vect{V}),\ta):=(x+1,\vect{U})$,
    where $\vect{U}[0]:=0$ and 
    \[\vect{U}[i]:=\min(\vect{V}[i],\XYranker{X}{w}{\vect{V}[i-1]}{\ta})\]
    for all $i\in [k-\ell]$,
    \item $F_A:=\{(k,\vect{V})\in Q_A\mid \vect{V}[k-\ell]\le |w|\}$.
\end{itemize}
In other words, each state is a pair
where the first item is a counter that grows up to $k$,
and the second item is a non-decreasing sequence of positions
representing a row of the DP table.
Here, $\vect{V}[i]=|w|+1$ indicates that there is no cell
in the current row
with value at least $i$.
Then, it is easy to see that $(0,[0,|w|+1,\ldots,|w|+1])$
represents the first row of the DP table.
At each input,
we increment the counter if it is less than $k$;
we reject if the counter is already $k$.
Then, we update the vector accordingly
so that $\vect{U}[i]$ correctly contains
the smallest index of the DP table
with value at least $i$.
Finally, we accept if the counter is exactly $k$,
and the $(k-\ell)^\text{th}$ cell
has a value other than $|w|+1$,
which means there is a cell with value~$k-\ell$
in the DP table.
Therefore, we have the following lemma.

\begin{theorem}
    There exists a DFA
    of size~$(k+1)(|w|+1)^{k-\ell}$
    that recognises $\LCS(w,k,\ell)$.
\end{theorem}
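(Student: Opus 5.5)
We prove the theorem by showing that the automaton $A$ constructed above recognises $\LCS(w,k,\ell)$, and then counting its states.

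\emph{Size.} The first component of a state ranges over $[k]_0$, giving $k+1$ choices. The second component is a vector of length $k-\ell$ over $[|w|+1]_0$; the entry $\vect{V}[0]=0$ is fixed and is not counted among the stored coordinates. This gives at most $(|w|+2)^{k-\ell}$ vectors. Restricting to the vectors actually used, namely non-decreasing ones whose stored coordinates lie in $[|w|+1]$ (since $\vect{V}[i]\geq 1$ for $i\geq 1$), gives at most $(|w|+1)^{k-\ell}$. Hence $|Q_A|\le (k+1)(|w|+1)^{k-\ell}$. The transitions out of states with counter $k$ go to an implicit sink, or are left undefined if partial DFAs are allowed.

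\emph{Invariant.} For $u\in\Sigma^{\le k}$, let $D_u[j]:=\lcsf(u,w[1..j])$ for $j\in[|w|]_0$. This is the last row of the standard LCS table, and it is non-decreasing in $j$. For each $i$, set $V_u[i]:=\min(\{j\mid D_u[j]\ge i\}\cup\{|w|+1\})$. The claim is that after reading $u$, the automaton is in state $(|u|,\vect{V})$ with $\vect{V}[i]=V_u[i]$ for all $i\in[k-\ell]_0$. This is proved by induction on $|u|$.
\begin{itemize}
\item \emph{Base case.} For $u=\varepsilon$ we have $D_\varepsilon\equiv 0$, so $V_\varepsilon=[0,|w|+1,\dots,|w|+1]$, which matches $s_A$.
\item \emph{Step.} For $u'=u\ta$, the key fact is
\[\min\{j\mid \lcsf(u\ta,w[1..j])\ge i\}=\min\bigl(V_u[i],\ \XYranker{X}{w}{V_u[i-1]}{\ta}\bigr).\]
Take a common scattered factor of $u\ta$ and $w[1..j]$ of length $i$. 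Either it avoids the final $\ta$ of $u\ta$, in which case it is already a common scattered factor of $u$ and $w[1..j]$, so $j\ge V_u[i]$. Or the final $\ta$ is matched to some position $p\le j$ with $w[p]=\ta$, and the first $i-1$ letters form a common scattered factor of $u$ and $w[1..p-1]$. Then $p-1\ge V_u[i-1]$, so $p\ge \XYranker{X}{w}{V_u[i-1]}{\ta}$. Conversely, each of the two candidates on the right-hand side is attained. If a candidate equals $|w|+1$, it correctly encodes "no such $j$". This is exactly the transition rule, so $\delta_A$ maintains the invariant. Monotonicity of $\vect{U}$ follows from that of $V_{u\ta}$, so the new state lies in $Q_A$.
\end{itemize}

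\emph{Acceptance.} By the invariant, a word $u$ of length $k$ ends in the state $(k,V_u)$. We have $V_u[k-\ell]\le|w|$ if and only if $D_u[|w|]=\lcsf(u,w)\ge k-\ell$, which is exactly the condition $u\in\LCS(w,k,\ell)$. Words of length different from $k$ are rejected: shorter words end in a non-final counter value, and longer words fall into the sink. The one point needing care is the edge case $k=\ell$, where the vector part has length $0$. Then $F_A$ should be read as "counter equals $k$", since $\lcsf(u,w)\ge 0$ trivially, and the size bound becomes $k+1$, which is consistent with the formula.

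The main obstacle is the recurrence in the inductive step, specifically its $\supseteq$ direction: verifying that the $X$-ranker from $V_u[i-1]$ really yields an LCS of length $i$ with the prefix ending at that position. This follows by appending the matched $\ta$ to a witness of length $i-1$ in $w[1..V_u[i-1]]$, because $\XYranker{X}{w}{V_u[i-1]}{\ta}>V_u[i-1]$.
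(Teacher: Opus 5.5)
Your proof is correct and follows the paper's construction: the automaton $A$ simulates the LCS dynamic programme one row at a time, storing for each $i\le k-\ell$ the smallest column whose value is at least $i$. Where the paper only sketches this, you prove the inductive invariant that justifies the transition $\min(\vect{V}[i],\XYranker{X}{w}{\vect{V}[i-1]}{\ta})$, and you observe that $\vect{V}[i]\ge 1$ for $i\ge 1$, which is what gives the $(|w|+1)^{k-\ell}$ count rather than $(|w|+2)^{k-\ell}$.
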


\begin{corollary}\label{cor:decideDiffConst}
    If $k-\ell\le c$ for some constant~$c$,
    we can decide whether or not
    $w_1 \jsimkl w_2$
    in polynomial time.
\end{corollary}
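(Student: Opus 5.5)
Fix $w_1,w_2$ and $k,\ell$ with $k-\ell\le c$. I would first check $p(w_1,k)=p(w_2,k)$ in linear time; if the vectors differ, \cref{lem:simklCharacterize} gives $w_1\not\jsimkl w_2$ at once. Otherwise, by \cref{lem:simklCharacterize} it remains to decide whether some $u$ with $|u|\le k$ satisfies $\lcsf(u,w_1)<|u|-\ell\le\lcsf(u,w_2)$ or the symmetric condition. I would treat each length $k'\in[k]_0$ separately and set $\ell':=\ell$. The threshold is then $k'-\ell\le k-\ell\le c$. If $k'-\ell\le 0$, the condition $\lcsf(u,w_1)<k'-\ell$ can never hold, so these lengths contribute no witness.

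For every $k'\le k$ with $k'-\ell\ge 1$, I build the DFAs $A_1,A_2$ of the preceding theorem for $\LCS(w_1,k',\ell)$ and $\LCS(w_2,k',\ell)$. Each has at most $(k'+1)(|w_i|+1)^{k'-\ell}\le (k+1)(|w_i|+1)^{c}$ states, which is polynomial. The transitions need the X-ranker, which can be precomputed as a table of size $O(|w_i||\Sigma|)$. Only the states reachable from $s_A$ are actually generated, by BFS.

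A witness of length $k'$ is exactly a word in the symmetric difference $\LCS(w_1,k',\ell)\,\triangle\,\LCS(w_2,k',\ell)$. I decide its emptiness with the product automaton $A_1\times A_2$, searching by BFS for a reachable pair of states in which exactly one component is final. The product has polynomially many states, $O(k^2(|w_1|+1)^c(|w_2|+1)^c)$, and $|\Sigma|$ transitions per state, each computed in $O(c)$ time. Summing over the at most $k+1$ values of $k'$ keeps the total polynomial in $|w_1|+|w_2|+k$.

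The main obstacle is correctness of the DFA construction, which I take from the stated theorem. A second point to argue is that checking each exact length $k'$ with its own threshold $k'-\ell$ covers all $u$ with $|u|\le k$, as \cref{lem:simklCharacterize} requires. It does, because the lemma's condition depends only on $|u|$. If the equality $p(w_1,k)=p(w_2,k)$ needs to be combined with the lcsf test, \cref{lem:simklCharacterize} already handles that.
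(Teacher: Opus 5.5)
Your route is exactly the paper's: a Parikh check, then a DFA comparison of $\LCS(w_1,k',\ell)$ and $\LCS(w_2,k',\ell)$ for each $k'\le k$. It shares the paper's genuine gap. The sentence ``a witness of length $k'$ is exactly a word in $\LCS(w_1,k',\ell)\triangle\LCS(w_2,k',\ell)$'' is false. So is the ``$\Leftarrow$'' direction of \cref{lem:simklCharacterize}, from which it comes.

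A word $u$ with $\lcsf(u,w_1)<|u|-\ell\le\lcsf(u,w_2)$ lies in $\JSF(w_2,\ell)$ only if also $p(u)\le p(w_2)$, and the global equality $p(w_1,k)=p(w_2,k)$ does not give this. Your closing remark that the lemma ``already handles'' the Parikh part is therefore exactly where the argument breaks.

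Concretely, let $w_1=\mathtt{aab}$, $w_2=\mathtt{aba}$, $k=3$, $\ell=1$.
\begin{itemize}
\item We have $p(w_1,3)=p(w_2,3)=(2,1)$.
\item Both $\JSF_{\le 3}(w_1,1)$ and $\JSF_{\le 3}(w_2,1)$ equal $\{u\mid p(u)\le(2,1)\}$; for $|u|=3$, the word $\mathtt{aa}$ is a common scattered factor. Hence $w_1\jsim_{3,1}w_2$.
\item However, $\lcsf(\mathtt{bba},\mathtt{aba})=2$ (via $\mathtt{ba}$) while $\lcsf(\mathtt{bba},\mathtt{aab})=1$. So $\mathtt{bba}\in\LCS(w_2,3,1)\setminus\LCS(w_1,3,1)$, and your product BFS answers ``not equivalent''.
\end{itemize}
The error is one-sided. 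If all symmetric differences are empty, equivalence does follow, because the ``$\Rightarrow$'' direction of the lemma is correct.

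To repair the characterisation, restrict to $\Pi:=\{u\mid p(u)\le P\}$ with $P:=p(w_1,k)=p(w_2,k)$. For $|u|\le k$, the condition $p(u)\le P$ is equivalent both to $p(u)\le p(w_1)$ and to $p(u)\le p(w_2)$. Hence $w_1\jsimkl w_2$ holds if and only if $p(w_1,k)=p(w_2,k)$ and $\LCS(w_1,k',\ell)\cap\Pi=\LCS(w_2,k',\ell)\cap\Pi$ for all $k'\le k$.

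Building $\Pi$ into your product automaton means carrying a capped Parikh vector. This multiplies the number of states by about $\prod_{\mathtt{a}\in\Sigma}(P[\mathtt{a}]+1)$.
\begin{itemize}
\item When $|\Sigma|$ is bounded, this factor is polynomial, so your argument goes through in that case.
\item In general it is exponential in $|\Sigma|$. For instance, for $P=(1,\ldots,1)$ the automaton must remember which letters have been used; compare the fooling set in \cref{prop:exp}.
\end{itemize}
For an unbounded alphabet, the polynomial bound therefore needs a further idea for handling the Parikh constraint. The DFA-equivalence argument as it stands does not supply it.
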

\begin{proof}
    We first test whether or not
    $p(w_1,k)=p(w_2,k)$.
    If not, we return false.
    Otherwise, for each $x\le k$,
    we construct DFAs
    that recognise $\LCS(w_1,x,\ell)$
    and $\LCS(w_2,x,\ell)$, respectively.
    If the two DFAs are not equivalent,
    we return false.
    If the pairs of DFAs are equivalent
    for all $x\le k$,
    then we return true.
    The correctness of this procedure
    follows from Lemma~\ref{lem:simklCharacterize}.
    Furthermore, since $x-\ell \le k-\ell\le c$,
    the size of the DFA is $O(kn^c)$, 
    where $n$ is the length of $w_1$ or $w_2$.
    Therefore, the size of the DFA is polynomial in $k$ and $n$,
    and since the equivalence of two DFAs can be checked
    in polynomial time,
    the entire procedure takes polynomial time.
\end{proof}

We now modify the DFA construction
based on the intuition
that the length of the longest common scattered factor
between $w$ and the suffix of another string~$u$
is at most the length of the suffix.
In other words, in constructing $\LCS(w,k,\ell)$,
we do not need to consider
scattered factors that start
after reading $\ell$ characters,
because there are only $k-\ell$ remaining characters to read.
In this sense, we construct a new automaton~$B:=(Q_B,\Sigma,\delta_B,s_B,F_B)$ as follows.
\begin{itemize}
    \item $Q_B:=[k]_0\times \{\vect{V}\in ([|w|+1]_0)^{\ell}\mid
    \forall i\in [\ell]:\vect{V}[i-1]\le \vect{V}[i] \}$,
    \item $s_B:=(0,[0,|w|+1,\ldots,|w|+1])$,
    \item for $x < k$ and $\vect{V}\in ([|w|+1]_0)^{\ell}$,
    the transition function is defined as
    \[\delta_B((x,\vect{V}),\ta):=(x+1,\vect{U}),\text{~where}\]
    \begin{itemize}
        \item if $x<\ell$, then $\vect{U}[0]:=0$ and 
        \[\vect{U}[i]:=\min(\vect{V}[i],\XYranker{X}{w}{\vect{V}[i-1]}{\ta})
        \text{~for~all~}i\in [\ell],\]
        \item otherwise, $\vect{U}[\ell-1]:=\XYranker{X}{w}{\vect{V}[\ell-1]}{\ta}$ and
        \[\vect{U}[i]:=\min(\vect{V}[i+1],\XYranker{X}{w}{\vect{V}[i]}{\ta})\]
        for all $i\in [\ell-1]$,
    \end{itemize}
    \item $F_B:=\{(k,\vect{V})\in Q_B\mid \vect{V}[0]\le |w|\}$.
\end{itemize}
The idea is to simulate the automaton~$A$
using a window of size~$\ell$
on the non-decreasing sequence of positions;
in the first $\ell$ steps,
this window captures
the first $\ell$ elements,
and in the later steps,
say $(\ell+j)^\text{th}$ step,
this window captures
the sequence of $\ell$ elements
starting from the $j^\text{th}$ element.
Therefore, at the $k^\text{th}$ step,
the $(k-\ell)^\text{th}$ element in the DFA~$A$
is the $0^\text{th}$ element in the DFA~$B$,
and thus we accept if $\vect{V}[0]\le |w|$.

\begin{theorem}
    There exists a DFA
    of size~$(k+1)(|w|+1)^{\ell}$
    that recognises $\LCS(w,k,\ell)$.
\end{theorem}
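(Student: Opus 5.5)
We prove the statement by showing that the automaton~$B$ accepts exactly $\LCS(w,k,\ell)$. Its state set has at most $(k+1)(|w|+1)^{\ell}$ elements, since the counter takes $k+1$ values and the vector has $\ell$ free coordinates in $[|w|+1]_0$. The plan is a step-by-step simulation argument against automaton~$A$, which we take as correct: after reading a prefix $u[1..x]$, the vector of $A$ is $\vect{V}_A^{(x)}$, where $\vect{V}_A^{(x)}[i]$ is the smallest index $j$ with $\lcsf(u[1..x],w[1..j])\ge i$, or $|w|+1$ if no such $j$ exists.

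\textbf{Step 1: define the invariant.} Let $s(x):=\max(0,x-\ell)$. The claim is that for every $x\le k$, the vector $\vect{V}_B^{(x)}$ of $B$ after reading $u[1..x]$ satisfies
\[\vect{V}_B^{(x)}[i]=\vect{V}_A^{(x)}[s(x)+i]\]
for all indices $i$ in $B$'s window, with $\vect{V}_A$ extended by value $|w|+1$ beyond $k-\ell$. The window in $A$ is conceptually unbounded, since $A$ only stores up to $k-\ell$ and the value at $k-\ell$ is the one that matters.

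\textbf{Step 2: induct on $x$.} For $x<\ell$ the window does not move, and $B$'s update is literally $A$'s recurrence $\vect{U}[i]=\min(\vect{V}[i],X(w,\vect{V}[i-1],\ta))$, so the invariant carries over. For $x\ge\ell$ the window shifts by one. The new entry at offset $i$ corresponds to $A$'s entry $s(x)+1+i$, and $A$'s recurrence gives
\[\vect{V}_A^{(x+1)}[s(x)+1+i]=\min\bigl(\vect{V}_A^{(x)}[s(x)+1+i],\,X(w,\vect{V}_A^{(x)}[s(x)+i],\ta)\bigr),\]
which is exactly $\min(\vect{V}[i+1],X(w,\vect{V}[i],\ta))$ in $B$'s coordinates. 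The top entry, $i=\ell-1$, would need $\vect{V}_A^{(x)}[s(x)+\ell]$, which lies outside the window. Here I will argue that this value is $|w|+1$: after reading $x$ letters the LCS is at most $x=s(x)+\ell$, so the entry at $s(x)+\ell$ can be finite. This is the delicate point and should be checked against the paper's indexing. Either the window really holds indices $s(x),\dots,s(x)+\ell-1$ with $\vect{V}_A^{(x)}[s(x)+\ell]$ possibly finite, or the convention differs by one. I would try first to justify the formula $\vect{U}[\ell-1]:=X(w,\vect{V}[\ell-1],\ta)$ by showing the dropped minimum term is $|w|+1$. If that fails, I would re-index the window to $s(x)+1,\dots,s(x)+\ell$ and treat $\vect{V}[0]$ as the lowest relevant entry.

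\textbf{Step 3: justify discarding the lower entries.} We must show that the entries below the window never matter for acceptance. The key fact is that the LCS of $u$ with $w$ is at most the LCS of the prefix plus the remaining suffix length. Hence, to reach value $k-\ell$ at step $k$, the DP value after step $x$ must be at least $k-\ell-(k-x)=x-\ell=s(x)$. Entries below $s(x)$ therefore cannot contribute to the final value at index $k-\ell$. Since each update of index $i$ depends only on indices $i$ and $i-1$, dropping lower indices never corrupts the higher ones, apart from the boundary entry, which is recomputed from the window's own bottom entry.

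\textbf{Step 4: acceptance.} At $x=k$ we have $s(k)=k-\ell$, so $\vect{V}_B^{(k)}[0]=\vect{V}_A^{(k)}[k-\ell]$, and $B$'s acceptance condition $\vect{V}[0]\le|w|$ coincides with $A$'s. The state count was established at the outset. I expect the main obstacle to be the boundary index bookkeeping in Step 2.
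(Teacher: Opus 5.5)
Your plan is the paper's: simulate $A$'s row-by-row DP through a window that starts sliding after $\ell$ steps, and accept via the window's bottom entry. The gap is exactly the boundary you flagged, and neither of your two resolutions closes it. Write $D_x[i]$ for the least $j$ with $\lcsf(u[1..x],w[1..j])\ge i$ (or $|w|+1$ if there is none), so that $D_{x+1}[i]=\min(D_x[i],X(w,D_x[i-1],\ta))$.

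With your window $D_x[s(x)],\dots,D_x[s(x)+\ell-1]$, the term dropped in the top update is $D_x[x]$ (for $x\ge\ell$). It is finite whenever $u[1..x]$ is a scattered factor of $w$. So $\vect{U}[\ell-1]:=X(w,\vect{V}[\ell-1],\ta)$ is false, not just badly indexed. For $w=\mathtt{ab}$, $k=3$, $\ell=1$, $u=\mathtt{aba}$ we have $\lcsf(u,w)=2=k-\ell$. Yet the one-entry window computes $X(w,0,\tb)=2$ instead of $D_2[1]=1$, then $X(w,2,\ta)=3=|w|+1$, and rejects. The paper's printed shift rule has the same off-by-one.

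Your fallback window $D_x[s(x)+1],\dots,D_x[s(x)+\ell]$ is closed under the recurrence. However, at $x=k$ its bottom entry is $D_k[k-\ell+1]$, so it recognises $\LCS(w,k,\ell-1)$ instead. For example, $u=\mathtt{ba}$ is wrongly rejected for $w=\mathtt{ab}$, $k=2$, $\ell=1$. Structurally, $D_k[k-\ell]$ depends on $D_x[x-\ell],\dots,D_x[\min(x,k-\ell)]$, which is $\ell+1$ entries whenever $\ell\le x\le k-\ell$.

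The missing idea is to keep $\ell+1$ entries, $\vect{V}[i]=D_x[s(x)+i]$ for $0\le i\le\ell$. This is what the paper's first phase (which updates $\vect{U}[1],\dots,\vect{U}[\ell]$) and its definition of $Q_B$ (monotonicity over $\vect{V}[0],\dots,\vect{V}[\ell]$) implicitly use. For $x\ge\ell$ the shift rule is then $\vect{U}[i]=\min(\vect{V}[i+1],X(w,\vect{V}[i],\ta))$ for $0\le i<\ell$, and $\vect{U}[\ell]=X(w,\vect{V}[\ell],\ta)$. The last rule is justified by your own observation applied one index higher: $D_x[x+1]=|w|+1$, since $\lcsf(u[1..x],\cdot)\le x$. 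The dependency part of your Step 3 then shows that the window is computed exactly and that $\vect{V}[0]=D_k[k-\ell]$ at $x=k$.

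Two further corrections. First, state the invariant against the untruncated values $D_x[i]$, not against $A$'s vector padded with $|w|+1$ beyond $k-\ell$. For $k-\ell<i\le x$ these values can be finite, and the window reaches such indices once $x>k-\ell$. Second, the state count: you now have $\ell+1$ free coordinates in $[|w|+1]_0$. That set has $|w|+2$ elements, so even your original count was off, and you get $(k+1)(|w|+2)^{\ell+1}$ states. This is polynomial for constant $\ell$, which is all the subsequent corollary needs, but it is not the literal $(k+1)(|w|+1)^{\ell}$. That literal bound cannot hold in general anyway: for $\ell=0$, $w=\mathtt{abc}$, $k=2$, the language $\{\mathtt{ab},\mathtt{ac},\mathtt{bc}\}$ has four distinct nonempty left quotients, more than $k+1=3$.
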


\begin{corollary}\label{cor:decideEllConst}
    If $\ell$ is constant,
    we can decide whether 
    $w_1 \jsimkl w_2$ holds
    in polynomial time.
\end{corollary}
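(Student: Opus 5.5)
The plan is to mirror the proof of \Cref{cor:decideDiffConst}, now using the automaton $B$ instead of $A$. First I compute $p(w_1,k)$ and $p(w_2,k)$ in time $O(|w_1|+|w_2|+|\Sigma|)$. If they differ, I return false; by \Cref{lem:simklCharacterize} this is correct. Otherwise, for every $x\in[k]$ with $x\ge \ell$ I build the DFAs $B_1^x$ and $B_2^x$ from the preceding theorem, which recognise $\LCS(w_1,x,\ell)$ and $\LCS(w_2,x,\ell)$. For $x\le \ell$ no construction is needed, since $\lcsf(u,w)\ge 0\ge x-\ell$ trivially holds and both languages equal $\Sigma^x$. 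I then test $L(B_1^x)=L(B_2^x)$, for instance by product construction and a reachability search for a state accepting in exactly one component. I return false on any mismatch and true otherwise.

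For correctness I invoke \Cref{lem:simklCharacterize}. Once $p(w_1,k)=p(w_2,k)$, we have $w_1\not\jsimkl w_2$ exactly when some $u$ with $|u|=x\le k$ satisfies $\lcsf(u,w_i)\ge x-\ell$ for one $i$ but not the other. This is precisely the condition $u\in \LCS(w_1,x,\ell)\,\triangle\,\LCS(w_2,x,\ell)$, so language equivalence for all $x$ captures the characterisation exactly. Note also that the lemma's witness strings $u$ need not satisfy the Parikh bound, which matches the definition of $\LCS$.

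For complexity, with $n=\max(|w_1|,|w_2|)$, each DFA has at most $(k+1)(n+1)^{\ell}$ states. A transition is computed in $O(\ell)$ time given a precomputed X-ranker table of size $O(n|\Sigma|)$. The product automaton therefore has $O(k^2 n^{2\ell})$ states and $O(|\Sigma|k^2n^{2\ell})$ transitions, and the reachability search runs in time linear in that. Summing over at most $k+1$ values of $x$ gives $O(|\Sigma|\ell k^3 n^{2\ell})$. One could restrict to $k\le n$ and assume $\ell$ is constant; alternatively, if $k>\max(|w_1|,|w_2|)$, then the only relevant $u$ have $p(u)\le p(w_i)$, which caps relevant lengths by $n$. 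Either way the bound is polynomial in $n$, $k$, and $|\Sigma|$.

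The main obstacle is relying on correctness of automaton $B$, namely that the sliding window of $\ell$ entries faithfully reproduces entry $x-\ell$ of $A$'s vector. The theorem stated before this corollary asserts exactly that, so I take it as given. The remaining work is routine bookkeeping: handling the border cases $x\le\ell$ and making sure $k$ does not appear in the exponent.
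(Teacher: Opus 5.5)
Your procedure is the paper's own: rerun the proof of \Cref{cor:decideDiffConst} with $B$ in place of $A$. Your bookkeeping for $x\le\ell$ and $k>n$ is fine, but the correctness step fails. The paper's one-line proof has the same defect, inherited from the ``$\Leftarrow$'' direction of \Cref{lem:simklCharacterize}.

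You claim that, once $p(w_1,k)=p(w_2,k)$, any $u$ of length $x\le k$ with $\lcsf(u,w_i)\ge x-\ell$ for exactly one $i$ forces $w_1\not\jsimkl w_2$. This is false. $\lcsf(u,w_1)\ge|u|-\ell$ makes $u$ an $\ell$-jumbled scattered factor of $w_1$ only if also $p(u)\le p(w_1)$ (\Cref{def:jidx}). Your remark that the witnesses ``need not satisfy the Parikh bound, which matches the definition of $\LCS$'' is exactly where the argument breaks.

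Take $\Sigma=\{\ta,\tb\}$, $w_1=\ta\tb\ta$, $w_2=\ta\ta\tb$, $k=3$, $\ell=1$. Then $p(w_1,3)=p(w_2,3)=(2,1)$, and for both $i$ we get $\JSF_{\le 3}(w_i,1)=\{u\in\Sigma^{\le 3}\mid p(u)\le p(w_1)\}$. Words of length at most $2$ qualify automatically. The three length-$3$ candidates $\ta\ta\tb,\ta\tb\ta,\tb\ta\ta$ have the common scattered factors $\ta\tb,\ta\tb\ta,\tb\ta$ with $w_1$ and $\ta\ta\tb,\ta\tb,\ta\ta$ with $w_2$. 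So $w_1\jsim_{3,1}w_2$.

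Yet $u=\tb\tb\ta$ has $\lcsf(u,w_1)=2$ (via $\tb\ta$) and $\lcsf(u,w_2)=1$, since neither $\tb\tb$ nor $\tb\ta$ is a scattered factor of $\ta\ta\tb$. Hence $B_1^3$ and $B_2^3$ are inequivalent and your algorithm wrongly answers false. Since $k=|w_i|$ here, capping $k$ at $n$ does not rescue it.

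Only the other direction holds. If $p(w_1,k)=p(w_2,k)$ and $\LCS(w_1,x,\ell)=\LCS(w_2,x,\ell)$ for all $x\le k$, then $w_1\jsimkl w_2$, because for $|u|\le k$ the bounds $p(u)\le p(w_1)$ and $p(u)\le p(w_2)$ coincide. So your ``true'' answers are sound. What must actually be compared is $P_x\cap\LCS(w_1,x,\ell)$ with $P_x\cap\LCS(w_2,x,\ell)$, where $P_x:=\{u\in\Sigma^x\mid p(u)\le p(w_1)\}$. A full Parikh counter would multiply the automaton size by $\prod_{\ta\in\Sigma}(\min(k,|w_1|_\ta)+1)$, which is exponential in $|\Sigma|$ in general, so an extra idea is needed to stay polynomial.

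One idea that works for constant $\ell$: in a witness the matched part is a scattered factor of $w_1$, so $p(u)\le p(w_1)$ can only fail for letters occurring at the at most $\ell$ skipped positions. To search for $u\in P_x\cap\LCS(w_1,x,\ell)\setminus\LCS(w_2,x,\ell)$, proceed as follows.
\begin{enumerate}
\item Guess a set $T\subseteq\Sigma$ with $|T|\le\ell$.
\item Run the NFA $\mathcal A$ for $w_1$ with skip transitions allowed only on letters of $T$.
\item Count $|u|_\ta$, up to $|w_1|_\ta+1$, only for $\ta\in T$.
\item Take the product with the DFA for $\Sigma^x\setminus\LCS(w_2,x,\ell)$ obtained from $B_2^x$.
\end{enumerate}
Nonemptiness is then reachability in a graph with $k\,n^{O(\ell)}$ states, and there are $O(|\Sigma|^\ell)$ choices of $T$. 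The reverse inclusion is handled symmetrically.
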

\begin{proof}
    We use exactly the same procedure
    as Corollary~\ref{cor:decideDiffConst},
    but this time we use the DFA~$B$
    instead of the DFA~$A$.
    Then, the size of the DFA
    is at most $O(kn^\ell)$,
    which is polynomial in $k$ and $n$.
    Therefore, we can decide whether or not
    $w_1 \jsimkl w_2$
    in polynomial time
    if $\ell$ is constant.
\end{proof}

Unfortunately, when $\ell$ and $k-\ell$
are arbitrary,
it is difficult to construct a DFA
that accepts all strings
in $\LCS(w,k,\ell)$.
Instead, we present an NFA
that accepts $\LCS(w,k,\ell)$ in the following.
Let the NFA $\mathcal A$ be defined by
\begin{itemize}
    \item $Q := [|w|]_0 \times [\ell]_0$
    \item $s := (0,0)$
    \item $F := Q$, and 
    \item for all $b \le \ell, i\le |w|$ define 
    \[
    \delta((i,b),\ta):=\begin{cases}
        \{(i,b+1), (X(w,i,\ta),b)\}, & \text{if } X(w,i,\ta) \neq |w|+1\text{ and } b<\ell,\\
        \{(X(w,i,\ta),b)\}, & \text{if } X(w,i,\ta) \neq |w|+1\text{ and } b=\ell,\\
        \{(i,b+1)\}, & \text{if } X(w,i,\ta) = |w|+1\text{ and } b<\ell,\\
        \emptyset, & \text{otherwise}.
    \end{cases}
    \]
\end{itemize}

Note that
an equivalent DFA can be obtained
by applying the powerset construction
on $\mathcal A$.
For an examplary construction of the NFA for $w = \mathtt{abb}$ and $\ell = 1$, consider \Cref{fig:automaton}.

\begin{figure}
    \centering
    \begin{tikzpicture}[shorten >=1pt,auto,node distance=2.5cm,
			semithick]
			\tikzstyle{every state}=[initial text=$ $]
			\node[initial,state,accepting] (00)                    {$0,0$};
			\node[state,accepting]         (10) [right of=00]      {$1,0$};
            \node[state,accepting]         (20) [right of=10]      {$2,0$};
            \node[state,accepting]         (30) [right of=20]      {$3,0$};

            \node[state,accepting]         (01) [below of=00]      {$0,1$};
			\node[state,accepting]         (11) [right of=01]      {$1,1$};
            \node[state,accepting]         (21) [right of=11]      {$2,1$};
            \node[state,accepting]         (31) [right of=21]      {$3,1$};

			
			\path[->] (00) edge[left] node {$\ta,\tb$} (01)
			          edge node {$\mathtt{a}$} (10)
			          edge[bend left] node {$\mathtt{b}$} (20)
            (10) edge[left] node {$\ta,\tb$} (11)
                 edge node {$\mathtt{b}$} (20)
            (20) edge[left] node {$\ta,\tb$} (21)
                 edge node {$\mathtt{b}$} (30)
            (30) edge[left] node {$\ta,\tb$} (31)
            
            (01) 
			          edge node {$\mathtt{a}$} (11)
			          edge[bend right] node[below] {$\mathtt{b}$} (21)
            (11) 
                 edge node {$\mathtt{b}$} (21)
            (21) 
                 edge node {$\mathtt{b}$} (31)

            ;
		\end{tikzpicture}
    \caption{The NFA for recognising the $1$-jumbled scattered factors of $w = \mathtt{abb}$. 
    }
    \label{fig:automaton}
\end{figure}

%% file: binom.tex
Research on scattered factors has been successful in finding formulas and identities for counting the different ways a scattered factor can be found in a word (multiplicities). For classical scattered factors the binomial coefficient is defined as the number of embeddings of $u$ in $w$ as a scattered factor (with no jumbles), i.e. $\binom{w}{u} = |E(u,w)|$ where $E(u,w)$ denotes the set of embeddings of $u$ in $w$. In this section, we want to expand such efforts to jumbled scattered factors.

First though, we have to clarify what exactly we mean by multiplicities. This leads us to the jumbled binomial coefficient.

\begin{definition}[Jumbled Binomial Coefficient]
  Let $u, w \in \Sigma^{*}$. Then, the \emph{jumbled binomial coefficient} $\jbinom{w}{u}$ is defined as \(
  \jbinom{w}{u} \mathrel{:=} \left|\loosejumbledembeddings(u, w)\right|.
\) Further, we also define the \emph{$\ell$-jumbled binomial coefficient} as $\jbinom[\ell]{w}{u} \mathrel{:=} \left|\loosejumbledembeddings(u, w,\ell)\right|$. \end{definition}

Therefore, the number of ways to find a jumbled scattered factor in a word is the number of jumbled embeddings. For example, let \(
u = \mathtt{an}, w = \mathtt{anna}
\), then $\jbinom{w}{u} = 4$ since we can find the following embeddings (cf. Figure~\ref{fig:embeddings-an-to-anna}).

    
\begin{figure}[h]
    \centering
\begin{tikzpicture}

  \node (a_top) at (0,0) {\texttt{a}};
  \node (n_top) [right=0.1cm of a_top] {\texttt{n}};

  \node (a_1) at (0,-1.5) {\texttt{a}};
  \node (n_1) [right=0.1cm of a_1]{\texttt{n}};
  \node (n_2) [right=0.1cm of n_1] {\texttt{n}};
  \node (a_2) [right=0.1cm of n_2] {\texttt{a}};

   \draw[->, thick] (a_top.south) -- (a_1.north);
  \draw[->, thick] (n_top.south) -- (n_1.north);

    \node (a_top) at (3,0) {\texttt{a}};
  \node (n_top) [right=0.1cm of a_top] {\texttt{n}};

  \node (a_1) at (3,-1.5) {\texttt{a}};
  \node (n_1) [right=0.1cm of a_1]{\texttt{n}};
  \node (n_2) [right=0.1cm of n_1] {\texttt{n}};
  \node (a_2) [right=0.1cm of n_2] {\texttt{a}};

   \draw[->, thick] (a_top.south) -- (a_1.north);
  \draw[->, thick] (n_top.south) -- (n_2.north);

    \node (a_top) at (6,0) {\texttt{a}};
  \node (n_top) [right=0.1cm of a_top] {\texttt{n}};
 
  \node (a_1) at (6,-1.5) {\texttt{a}};
  \node (n_1) [right=0.1cm of a_1]{\texttt{n}};
  \node (n_2) [right=0.1cm of n_1] {\texttt{n}};
  \node (a_2) [right=0.1cm of n_2] {\texttt{a}};

   \draw[->, thick, blue] (a_top.south) -- (a_2.north);
  \draw[->, thick] (n_top.south) -- (n_1.north);
  
    \node (a_top) at (9,0) {\texttt{a}};
  \node (n_top) [right=0.1cm of a_top] {\texttt{n}};
 
  \node (a_1) at (9,-1.5) {\texttt{a}};
  \node (n_1) [right=0.1cm of a_1]{\texttt{n}};
  \node (n_2) [right=0.1cm of n_1] {\texttt{n}};
  \node (a_2) [right=0.1cm of n_2] {\texttt{a}};

   \draw[->, thick, blue] (a_top.south) -- (a_2.north);
  \draw[->, thick] (n_top.south) -- (n_2.north);
\end{tikzpicture}
    \caption{Four (jumbled) embeddings of \texttt{an} in \texttt{anna} with a blue arrow indicating jumbles.}
    \label{fig:embeddings-an-to-anna}
\end{figure}
We can also immediately see that $\jbinom[0]{w}{u}=\binom{w}{u} =2$ since only the first two embeddings are non-jumbled.

\begin{proposition}
  Let $w \in \Sigma^{*}$ and $u \in \loosejscatfact(w)$. Then,
  \begin{equation*}
  \jbinom{w}{u} = \prod_{\ta \in \Sigma} \frac{\len{w}_\ta!}{(\len{w}_\ta - \len{u}_\ta)!}.
  \end{equation*}
\end{proposition}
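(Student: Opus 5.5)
We first need to fix what $\loosejumbledembeddings(u,w)$ means when no bound on jumbles is given. We read it as the set of all injective maps $e:[|u|]\to[|w|]$ with $u[i]=w[e(i)]$ for all $i$. This coincides with $\loosejumbledembeddings(u,w,\ell)$ for every $\ell\ge|u|$, since then $S=\emptyset$ is allowed and the monotonicity requirement is void. So the statement reduces to counting letter-preserving injections, and the plan is a direct product decomposition over the letters.

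For each $\ta\in\Sigma$ let $P_\ta(u)=\{i\in[|u|]\mid u[i]=\ta\}$ and $P_\ta(w)=\{j\in[|w|]\mid w[j]=\ta\}$. These have sizes $\len{u}_\ta$ and $\len{w}_\ta$. The sets $P_\ta(u)$ partition $[|u|]$, and the sets $P_\ta(w)$ partition $[|w|]$.

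Next we define a map $\Phi$ sending a jumbled embedding $e$ to the tuple $(e|_{P_\ta(u)})_{\ta\in\Sigma}$. Because letters are preserved, $e|_{P_\ta(u)}$ lands in $P_\ta(w)$, and it is injective as a restriction of an injective map. Conversely, we take any tuple of injections $e_\ta:P_\ta(u)\to P_\ta(w)$ and glue them into $e:[|u|]\to[|w|]$. The glued map is letter-preserving by construction. It is also injective: two positions with different letters have images in the disjoint sets $P_\ta(w)$ and $P_\tb(w)$, and two positions with the same letter are separated by the injectivity of $e_\ta$. Gluing and restricting are mutually inverse, so $\Phi$ is a bijection between $\loosejumbledembeddings(u,w)$ and $\prod_{\ta}\{\text{injections } P_\ta(u)\to P_\ta(w)\}$.

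Finally we count injections from a set of size $m$ into a set of size $n$ with $m\le n$; there are $n!/(n-m)!$ of them. The hypothesis $u\in\loosejscatfact(w)$ gives $p(u)\le p(w)$, which is exactly $m=\len{u}_\ta\le n=\len{w}_\ta$ for every letter, so each factor is well defined. Multiplying over $\ta\in\Sigma$ yields the claimed formula. For letters absent from $u$ the factor is $1$.

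The only delicate step is the first one: making sure the unrestricted $\jbinom{w}{u}$ really places no monotonicity constraint, i.e.\ that taking $\ell\ge|u|$ and $S=\emptyset$ is admissible under the paper's relaxed definition, which dropped the condition $\ell\in[|u|-1]_0$. The remaining steps are routine bookkeeping.
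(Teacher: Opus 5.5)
Your proposal is correct and follows the same route as the paper. The paper also splits positions by letter into sets $I_\ta$ and $J_\ta$, treats an unrestricted jumbled embedding as a family of independent injections $I_\ta\to J_\ta$, and multiplies the counts $\len{w}_\ta!/(\len{w}_\ta-\len{u}_\ta)!$. Your explicit restriction/gluing bijection and your identification of $\Emb(u,w)$ with $\Emb(u,w,\ell)$ for $\ell\ge\len{u}$ simply make precise what the paper states informally.
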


\begin{proof}
Since we do not restrict the jumbles,
we just count all possible
injective mappings~$e : [\len{u}] \rightarrow [\len{w}]$
with $u[i] = w[e(i)]$ for all $i \in [\len{u}]$.
If $u = \eps$ then there is only the trivial mapping,
so $\jbinom{w}{u} = 1$. For the case $u \neq \eps$,
we first note that any $i , j \in [\len{u}]$ with $u[i] \neq u[j]$ 
will be mapped independently by any such mapping,
since we know $e(i) \neq e'(j)$
for all jumbled embeddings~$e , e'$.
Now consider sets $I_{\ta} \mathrel{:=} \left\{i \in [\len{u}] \mid u[i] = \ta\right\}$ and $J_{\ta} \mathrel{:=} \left\{j \in [\len{w}] \mid w[j] = \ta\right\}$ for any $\ta \in \Sigma$. Then, for any jumbled embedding $e : [\len{u}] \rightarrow [\len{w}]$ we know that there are \(
\frac{\left|J_{\ta}\right|!}{(\left|J_{\ta}\right| - \left|I_{\ta}\right|)!} = \frac{\len{w}_{\ta}!}{(\len{w}_{\ta} - \len{u}_{\ta})!}
\) injective mappings $\restrict{e}{I_{\ta}} : I_{\ta} \rightarrow J_{\ta}$. The formula follows immediately from the independence of $\restrict{e}{ I_{\ta}}$ and $\restrict{e}{ I_{\tb}}$ for any $e \in \loosejumbledembeddings(u, w)$, $\ta , \tb \in \Sigma$ with \(
\ta \neq \tb
\). \end{proof}

\begin{corollary}\label{cor:unique}
If $w\in\Sigma^{\ast}$ with $|\al(w)|=|w|$,
then $|\loosejumbledembeddings(u,w,\ell)|=1$ for all $u\in\JScatFact(w,\ell)$.
Further, if $w \in \Sigma^*$ and for all $\ta\in\al(u)$ we have $|w|_{\ta} = |u|_{\ta}=1$,
then $|\loosejumbledembeddings(u,w,\ell)| = 1$ for all $u \in \JScatFact(w,\ell)$.
\end{corollary}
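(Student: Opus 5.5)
The plan is to prove a slightly stronger statement that covers both parts of \Cref{cor:unique} at once. If every letter $\ta\in\al(u)$ satisfies $|w|_\ta=|u|_\ta=1$, then $u$ has at most one jumbled embedding into $w$. Together with existence for $u\in\JScatFact(w,\ell)$, this gives exactly one $\ell$-jumbled embedding.

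The first part reduces to the second. If $|\al(w)|=|w|$, every letter of $w$ occurs exactly once in $w$. Any $u\in\JScatFact(w,\ell)$ satisfies $p(u)\le p(w)$, so each $\ta\in\al(u)$ has $1\le|u|_\ta\le|w|_\ta=1$. Hence the hypothesis of the second part holds.

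For the second part, first show existence. Since $u\in\JScatFact(w,\ell)$, we have $\lcsf(u,w)\ge|u|-\ell$, so there is a common scattered factor $v$ of $u$ and $w$ with $|v|\ge|u|-\ell$. Fix an embedding of $v$ as a scattered factor of $u$, with positions $S\subseteq[|u|]$ and $|S|=|v|$, and an embedding of $v$ into $w$. Composing these gives a strictly increasing map $S\to[|w|]$ that respects letters.

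Extend this map to all of $[|u|]$. For each $i\notin S$, send $i$ to the unique position $j$ with $w[j]=u[i]$. Injectivity holds because $u$ has pairwise distinct letters ($|u|_\ta=1$ for each $\ta\in\al(u)$). Distinct positions of $u$ therefore carry distinct letters, and each of these letters has exactly one occurrence in $w$. So the map is injective, letter-preserving, and strictly increasing on $S$ with $|S|\ge|u|-\ell$, which makes it an element of $\loosejumbledembeddings(u,w,\ell)$.

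For uniqueness, take any $e\in\loosejumbledembeddings(u,w,\ell)$. For each $i\in[|u|]$ the constraint $w[e(i)]=u[i]$ leaves only one possible value of $e(i)$, namely the unique occurrence of $u[i]$ in $w$. Therefore $e$ is determined completely. Equivalently, by the Proposition above, $\jbinom{w}{u}=\prod_{\ta}1!/0!=1$, and $\loosejumbledembeddings(u,w,\ell)\subseteq\loosejumbledembeddings(u,w)$.

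The only step that needs any care is existence: the extension must preserve injectivity and leave the increasing subset $S$ intact. This follows directly from the distinctness of the letters of $u$.
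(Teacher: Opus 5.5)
Your proof is correct and takes essentially the paper's route. The paper leaves this corollary unproved, as an immediate consequence of the product formula for $\jbinom{w}{u}$: every factor equals $1$, and $\Emb(u,w,\ell)\subseteq\Emb(u,w)$, which is exactly your forced-value uniqueness argument. Your explicit existence step, which builds an $\ell$-jumbled embedding from a common scattered factor of length at least $|u|-\ell$, supplies the nonemptiness of $\Emb(u,w,\ell)$ that the paper takes for granted.
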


We list some example calculations, but we do not list all possible embeddings explicitly.
\[
\jbinom{\mathtt{anna}}{\mathtt{an}} = \left(\frac{2 !}{(2 - 1)!}\right)^2 = 4 \quad \quad \quad \jbinom{\mathtt{banana}}{\mathtt{ana}} = \frac{3 !}{(3 - 2)!} \cdot \frac{1 !}{(1 - 0)!} \cdot \frac{2 !}{(2 - 1)!} = 12
\]

Notably, it becomes more complex when we restrict the number of jumbles. 
One strategy to compute all the $\ell$-jumbled embeddings and thus the $\ell$-jumbled binomial coefficient is to
\begin{enumerate}
    \item enumerate all jumbled embeddings, ignoring the restrictions on the number of jumbles,
    \item check the amount of jumbles by checking the longest increasing common scattered factor.
\end{enumerate}
This strategy also leads us to the following recurrence for the $\ell$-jumbled binomial coefficient.

\begin{proposition}
Given $w\in\Sigma^*$, an integer \(\ell \in \N_0 \), and $u\in\JSF(w)$,
let $\Gamma := \{\tc_i\mid 1\le i\le\len u\}$ be an alphabet
disjoint from $\Sigma$
that satisfies $\tc_i<\tc_j$
whenever $i<j$.
Then, the $\ell$-jumbled binomial coefficient $\jbinom[\ell]{w}{u}$ can be computed by the recursion
\begin{align}
    \jbinom[\ell]{w}{u} &=  \operatorname{LIS}(w, \len{u}-\ell)
     &&\text{if } u = \tc_1\cdots \tc_{\len u} \text{ and}
    \label{eq:jbinom2}\\
  \jbinom[\ell]{w}{u} &= \sum_{w = x' \ta y'} \jbinom[\ell]{x' \tc_{\len{x \ta}} y'}{x \tc_{\len{x \ta}} y}
   &&\text{otherwise, where }x\ta y=u
  \text{ with }x\in \Gamma^*, \ta\in \Sigma
  \label{eq:jbinom1}
\end{align}
where $\operatorname{LIS}(w,m)=1$
if the length of the longest increasing subsequence of $w$
among the characters in $\Gamma$
is at least $m$
and $0$ otherwise.
\end{proposition}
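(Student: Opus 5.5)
We prove the recursion by induction on the number of letters of $u$ that still lie in $\Sigma$, i.e.\ on $|u|_\Sigma := |\{i\in[|u|]\mid u[i]\in\Sigma\}|$. The idea is that each marker $\tc_i$ encodes a committed choice of where position $i$ of $u$ is mapped. The marker $\tc_i$ occurs exactly once in the current $u$, namely at position $i$, and at most once in the current $w$, namely at the chosen target. An embedding of the new pair must therefore send $\tc_i$ to that fixed position. Because $\tc_1<\cdots<\tc_{|u|}$ and $\Gamma$ is disjoint from $\Sigma$, reading the markers left to right in the final word $w$ yields a word over $\Gamma$. An increasing subsequence of it corresponds to a set $S$ of positions of $u$ on which the embedding is strictly increasing.

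\textbf{Base case~(\ref{eq:jbinom2}).} Here $u=\tc_1\cdots\tc_{|u|}$. I would first note that we only reach this case from a $w$ containing each $\tc_i$ exactly once. Then there is exactly one injective letter-preserving map $e:[|u|]\to[|w|]$, as in \Cref{cor:unique}, namely $e(i)=$ the position of $\tc_i$ in $w$. This $e$ is $\ell$-jumbled if and only if some $S\subseteq[|u|]$ with $|S|\ge|u|-\ell$ makes $e|_S$ strictly increasing. Since $e$ maps $i$ to the occurrence of $\tc_i$, the condition $i<j \Rightarrow e(i)<e(j)$ on $S$ says exactly that the letters $\tc_i$, $i\in S$, appear in increasing order in $w$. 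Hence such an $S$ exists if and only if the longest increasing subsequence of $w$ restricted to $\Gamma$ has length at least $|u|-\ell$. This gives $\operatorname{LIS}(w,|u|-\ell)$.

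\textbf{Inductive step~(\ref{eq:jbinom1}).} Write $u=x\ta y$ with $x\in\Gamma^*$ and $\ta\in\Sigma$, so $\ta$ sits at position $p:=|x\ta|$. I would construct a bijection
\[
\loosejumbledembeddings(u,w,\ell)\;\longleftrightarrow\;\dot\bigcup_{w=x'\ta y'}\loosejumbledembeddings(x\tc_p y,\;x'\tc_p y',\;\ell).
\]
The map sends $e$ to the pair consisting of the factorisation with $|x'\ta|=e(p)$ and the same function $e$. The inverse forgets the marker. Three checks are needed.
\begin{enumerate}
\item Letter matching is preserved: $u[p]=\ta=w[e(p)]$ holds if and only if $\tc_p$ matches $\tc_p$, and all other positions are unchanged.
\item Injectivity and the witness set $S$ are untouched, since the map on positions is literally the same.
\item In the new pair, any embedding must send $p$ to $e(p)$, because $\tc_p$ is unique in $x'\tc_p y'$. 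Also, $\tc_p$ does not already occur in $w$: markers are introduced in increasing position order in $u$, and $\tc_p\notin\Sigma$. So the sets indexed by distinct factorisations are disjoint.
\end{enumerate}
Summing over factorisations gives~(\ref{eq:jbinom1}). The new $u$ has $|u|_\Sigma$ decreased by one and is again of the form $x\ta y$ with $x\in\Gamma^*$, or consists only of markers. Factorisations with $w[e(p)]\neq\ta$ simply do not occur in the sum. Termination follows, and the condition $u\in\JSF(w)$ is only needed so that the recursion is meaningful; otherwise both sides are $0$.

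\textbf{Main obstacle.} The delicate point is the base case, specifically ensuring that the ordering $\tc_i<\tc_j$ faithfully encodes the index order of $u$. This requires the induction invariant that after the substitutions, $u$ is exactly $\tc_1\cdots\tc_{|u|}$ and $w$ contains each $\tc_i$ once, plus possibly letters of $\Sigma$, which the LIS ignores. I would state this invariant explicitly and carry it through the induction before invoking the LIS characterisation.
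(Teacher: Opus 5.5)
Your proof is correct and uses the same ingredients as the paper's: each marker $\tc_i$ pins down the image of position $i$, and the fully substituted pairs correspond bijectively to the jumbled embeddings of $u$ into $w$, with the unique leaf embedding coming from \Cref{cor:unique}; the value at each leaf is then exactly the LIS test. The difference is only in bookkeeping: you verify each recursion step directly as a partition of $\Emb(u,w,\ell)$ into the sets $\Emb(x\tc_p y, x'\tc_p y',\ell)$ indexed by the occurrences of $\ta$, carrying the $\ell$-constraint and the marker invariant through every step, whereas the paper unrolls the whole recursion, proves reachability, distinctness of the leaves and uniqueness of the leaf embedding as three separate global claims, and applies the $\ell$-filter only at the leaves.
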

 \label{prop:reccurence}
\begin{proof}
Let $u , w \in \Sigma^{*}$ and $u' = \tc_1 \cdots \tc_{\len{u}}$. 
We prove the following statements.
 \begin{enumerate}
 \item Let $u'$ as above and \(w' \in (\Sigma \cup \Gamma)^{*} \) be the words that we get when \cref{eq:jbinom2} applies. Then, $\len{w'}_{\tc_i} = 1$ for all \(i \in [\len{u}] \) and there is one unique embedding \(e \in \loosejumbledembeddings(u', w') \) and $e \in \loosejumbledembeddings(u, w)$.
   \item Let $e \in \loosejumbledembeddings(u, w)$ and $\len{w} = \len{w'}$ with \[
w'[i] = \begin{cases}
 \tc_{j}, \text{if there exists } j \text{ such that }e(j)=i \\
 w[i], \text{otherwise}
\end{cases}
\] for all $i \in [\len{w}]$. Then, the computation of $\jbinom[\ell]{w}{u}$ will reach $\jbinom[\ell]{w'}{u'}$. 
    \item Assume that, after replacing every letter of $u$, we arrive at $\jbinom[\ell]{w}{u} = \sum_{i=1}^m \jbinom[\ell]{w_i}{u'}$ for some $m\in\mathbb N$. Then, $w_i\neq w_{i'}$ for all $i\neq i'$.
\end{enumerate}

\emph{Ad 1.} By construction, every $\tc_i$ is assigned only once, thus $\len{w'}_{\tc_i} = 1$. For the same reason $u' = \tc_1 \ldots \tc_{\len{u}}$. By \cref{cor:unique}, we get that there is a unique embedding $e \in \loosejumbledembeddings(u', w')$. Since $\tc_i$ is only assigned to the same letter in $u$ and $w$, we also get $e \in \loosejumbledembeddings(u, w)$.

\emph{Ad 2.} Let $e \in \loosejumbledembeddings(u, w)$. Then, by \cref{eq:jbinom1} we can choose to assign the $i^\text{th}$ letter of $u$ to the $e(i)^\text{th}$ letter of $w$ since $u[i] = w[e(i)]$. Doing this for all letters of $u$ gives said words. 

\emph{Ad 3.} Let $w=w_1\ta w_2\ta\cdots \ta w_{\len{w}_\ta+1}$ with $w_i\in(\Sigma\setminus\{\ta\})^\ast$ and $u=\ta y$. After the first recursion there will be $\len{w}_\ta$ many summands $\jbinom[\ell]{w'_i}{\tc_1y}$ where $w'_i$ is obtained from $w$ by replacing the $i^\text{th}$ occurrence of $\ta$ by $\tc_1$. Hence the statement holds after the first step. Now assume after $j$ steps we have $\jbinom[\ell]{w}{u}=\sum_{i=1}^{m_j}\jbinom[\ell]{w_i}{\tc_1\cdots\tc_j\ta y}$, for some $m_j\in\mathbb N$ and $w_i\neq w_{i'}$ for all $i\neq i'$. Then for any $i$ the coefficient $\jbinom[\ell]{w_i}{\tc_1\cdots\tc_j\ta y}$ will split up into unique summands with the same argument as in the base case. Furthermore, since $w_i$ and $w_{i'}$ differ in the positions of the letters $\tc_1,\ldots,\tc_j$, which are untouched be the $j+1^\text{th}$ recursion step, every summand derived from $w_i$ differs in the positions of $\tc_1,\ldots,\tc_j$ from every summand derived from $w_{i'}$. Statement $3$ follows by induction.

Now, by Statement $3$, $\jbinom[\ell]{w}{u}$ splits up into several unique coefficients $\jbinom[\ell]{w'}{u'}$, each of them corresponds to one jumbled embedding of $u$ into $w$ by Statement $2$ and only one of those embeddings by Statement $1$. Lastly, $\jbinom[\ell]{w'}{u'}=1$ if and only if the embedding of $u'$ into $w'$ has at most $\ell$ jumbles. Accordingly, the recursion correctly computes the amount of jumbled embeddings of $u$ into $w$.
\end{proof}

\subsection{Decompositions of $\ell$-Jumbled Embeddings}

Before we continue, we will note two interesting properties of the jumbled binomial coefficients and investigate the behaviour of $\ell$-jumbled embeddings more closely.

\begin{lemma}\label{lem:bijective-embedding}
Let $u , w \in \Sigma^{*} , \ell \in \N_0$ and \(
e \in \loosejumbledembeddings(u, w,\ell)
\). If $\len{u} = \len{w}$, then $e$ is bijective and $e^{-1} \in \loosejumbledembeddings(w, u,\ell)$. \end{lemma}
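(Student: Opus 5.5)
Since $e$ is injective from $[\len{u}]$ into $[\len{w}]$ and the two sets have the same (finite) cardinality, $e$ is bijective by the pigeonhole principle. So the inverse $e^{-1}:[\len{w}]\to[\len{u}]$ exists and is itself bijective, in particular injective. It remains to check the two defining conditions of an $\ell$-jumbled embedding of $w$ into $u$: letter compatibility and the existence of a large increasing restriction.

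For letter compatibility, take $j\in[\len{w}]$ and set $i:=e^{-1}(j)$. Then $u[i]=w[e(i)]=w[j]$, so $w[j]=u[e^{-1}(j)]$, as required. We should also confirm that $w\in\JSF(u,\ell)$ so that the definition applies. First, $p(w)=p(u)$, because $e$ is a letter-preserving bijection. Second, the increasing part of $e$ yields a common scattered factor of length at least $\len{u}-\ell=\len{w}-\ell$; this is exactly what the next step provides.

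For the increasing restriction, let $S\subseteq[\len{u}]$ with $|S|\ge\len{u}-\ell$ be such that $e|_S$ is strictly increasing, and put $T:=e(S)\subseteq[\len{w}]$. Injectivity of $e$ gives $|T|=|S|\ge\len{u}-\ell=\len{w}-\ell$. The map $e^{-1}|_T:T\to S$ is the inverse of the strictly increasing bijection $e|_S:S\to T$, and the inverse of a strictly increasing bijection between totally ordered sets is strictly increasing. Explicitly, if $t<t'$ in $T$ but $e^{-1}(t)\ge e^{-1}(t')$, applying the increasing map $e|_S$ would give $t\ge t'$, a contradiction. Moreover, the letters $w[t]$ for $t\in T$, read in increasing order, coincide with the letters $u[s]$ for $s\in S$ read in increasing order. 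This gives a common scattered factor of length $|T|\ge\len{w}-\ell$, so $\lcsf(w,u)\ge\len{w}-\ell$ and $w\in\JSF(u,\ell)$.

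Hence $e^{-1}\in\loosejumbledembeddings(w,u,\ell)$. No step is a genuine obstacle. The only point needing care is remembering to verify the membership $w\in\JSF(u,\ell)$ implicit in the definition, together with the monotonicity of the inverse of $e|_S$.
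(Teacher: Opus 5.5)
Your proof is correct and follows essentially the same route as the paper: bijectivity by equal cardinality, letter compatibility via $w[e(i)]=u[i]$, and strict monotonicity of the inverse on the image of the increasing part. You are slightly more careful than the paper on two points: it restricts $e^{-1}$ to $[\len{u}]\setminus L$ where the correct domain is the image $e([\len{u}]\setminus L)$, and it never checks the membership $w\in\JSF(u,\ell)$ that the definition of an $\ell$-jumbled embedding presupposes.
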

\begin{proof}
Let $L \subset [\len{u}]$ with $\left|L\right| \leq \ell$ such that $\restrict{e}{ [\len{u}] \setminus L}$ increasing. By definition $e$ is injective. Thus, $\len{u} = \len{w}$ implies that $e$ is bijective. Since $\restrict{e}{ [\len{u}] \setminus L}$ is increasing, so is $\restrict{e^{-1}}{ [\len{u}] \setminus L}$. Also, \(u[i] = u[e^{-1}(e(i))] = w[e(i)] \). \end{proof}

\begin{corollary}
Let $u , w \in \Sigma^{*}$ and $\ell \in \N_0$. If $\len{u} = \len{w}$, then \[
\jbinom[\ell]{w}{u} = \jbinom[\ell]{u}{w}.
\] \end{corollary}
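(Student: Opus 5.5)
The plan is to exhibit an explicit bijection between $\loosejumbledembeddings(u,w,\ell)$ and $\loosejumbledembeddings(w,u,\ell)$, namely inversion $e\mapsto e^{-1}$, and to use \Cref{lem:bijective-embedding} for both well-definedness and invertibility. Both sides are finite sets. The equality of cardinalities then gives $\jbinom[\ell]{w}{u}=\jbinom[\ell]{u}{w}$ directly from the definition of the $\ell$-jumbled binomial coefficient.

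\textbf{Degenerate case.} If $u\notin\JSF(w,\ell)$, the set $\loosejumbledembeddings(u,w,\ell)$ is empty, and I would read the coefficient as $0$ by convention. If $|u|=|w|$, then $u\in\JSF(w,\ell)$ holds if and only if $w\in\JSF(u,\ell)$. This follows because $p(u)\le p(w)$ together with $|u|=|w|$ forces $p(u)=p(w)$, and $\lcsf$ is symmetric, so the condition $\lcsf(u,w)\ge|u|-\ell=|w|-\ell$ is the same in both directions. Consequently both sides vanish together.

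\textbf{Main case.} Define $\Phi:\loosejumbledembeddings(u,w,\ell)\to\loosejumbledembeddings(w,u,\ell)$ by $\Phi(e)=e^{-1}$. By \Cref{lem:bijective-embedding}, each such $e$ is bijective and $e^{-1}\in\loosejumbledembeddings(w,u,\ell)$, so $\Phi$ is well defined. Applying the same lemma with the roles of $u$ and $w$ swapped (allowed since $|w|=|u|$) gives a map $\Psi:\loosejumbledembeddings(w,u,\ell)\to\loosejumbledembeddings(u,w,\ell)$, $f\mapsto f^{-1}$. Because $(e^{-1})^{-1}=e$, the maps $\Psi\circ\Phi$ and $\Phi\circ\Psi$ are identities. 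Hence $\Phi$ is a bijection and the two sets have the same cardinality.

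\textbf{Expected obstacle.} The only delicate point is the increasing-restriction condition in \Cref{lem:bijective-embedding}. The set on which $e^{-1}$ is increasing is the image $e([|u|]\setminus L)$, not $[|u|]\setminus L$ itself. This image still has size at least $|w|-\ell$, and on it $e^{-1}$ is strictly increasing because the inverse of a strictly increasing bijection between finite ordered sets is strictly increasing. I would check this detail to make sure the lemma's conclusion holds as stated, and the symmetry then follows immediately.
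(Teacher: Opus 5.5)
Your proposal is correct and matches the paper's argument: the paper also derives the corollary from the lemma on bijective embeddings together with injectivity of inversion $e \mapsto e^{-1}$, used in both directions, which is exactly your bijection $\Phi$ with inverse $\Psi$. Your observation that $e^{-1}$ is increasing on the image $e([|u|]\setminus L)$, not on $[|u|]\setminus L$ itself, correctly fixes a small imprecision in the paper's proof of that lemma.
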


\begin{proof}
This follows immediately from \cref{lem:bijective-embedding} and the fact that $e \neq e'$ iff $e^{-1} \neq e'^{-1}$ for any \(e , e' \in \loosejumbledembeddings(u, w,\ell) \). \end{proof}

\begin{lemma}\label{lem:jumbled-embedding-factorisation}
Let $u , w \in \Sigma^{*} , \ell \in \N_0$ and \(\widetilde{e} \in \loosejumbledembeddings(u, w,\ell) \). 
 \begin{enumerate}
 \item Then, we can factor \(\widetilde{e} = e \circ \widetilde{f} \) such that $e \in \embeddings(v, w)$ and \(\widetilde{f} \in \loosejumbledembeddings(u, v,\ell) \) for any $v \in \ScatFact_{\len{u}}(w)$ with $e([\len{v}]) \subset \widetilde{e}([\len{u}])$. \item In particular, the factorisation is unique if $\len{v} = \len{u}$.
 \item Conversely, given any \(v \in \Sigma^{*} , e \in \embeddings(v, w), \widetilde{e} \in \loosejumbledembeddings(u, v,\ell) \), the composition \(e \circ \widetilde{e} \in \loosejumbledembeddings(u, w,\ell) \).
 \end{enumerate}
 \end{lemma}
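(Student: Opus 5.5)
The plan is to build the factorisation directly from the image of $\widetilde{e}$. I read the hypothesis on $v$ as saying that $v$ is the scattered factor of $w$ of length $\len{u}$ whose occurrence is $e$, with $e([\len{v}])$ contained in (hence equal to) the image $\widetilde{e}([\len{u}])$.

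\emph{Part 1.} Write $\widetilde{e}([\len{u}]) = \{j_1 < j_2 < \dots < j_{\len{u}}\}$; this set has exactly $\len{u}$ elements because $\widetilde{e}$ is injective. Put $v := w[j_1]\cdots w[j_{\len{u}}] \in \ScatFact_{\len{u}}(w)$ and let $e(i) := j_i$. Then $e$ is strictly increasing with $v[i] = w[e(i)]$, so $e \in \embeddings(v,w)$, and $e$ is a bijection $[\len{v}] \to \widetilde{e}([\len{u}])$. Define $\widetilde{f} := e^{-1}\circ \widetilde{e} : [\len{u}] \to [\len{v}]$, which gives $e\circ\widetilde{f} = \widetilde{e}$.
\begin{itemize}
\item $\widetilde{f}$ is injective, as a composition of injective maps.
\item Letters match: $v[\widetilde{f}(i)] = w[e(\widetilde{f}(i))] = w[\widetilde{e}(i)] = u[i]$.
\item If $S$ with $|S| \ge \len{u}-\ell$ witnesses that $\widetilde{e}|_S$ is strictly increasing, then $\widetilde{f}|_S = e^{-1}\circ \widetilde{e}|_S$ is strictly increasing, because $e^{-1}$ is strictly increasing on $\widetilde{e}([\len{u}])$.
\end{itemize}
Hence $\widetilde{f} \in \loosejumbledembeddings(u,v,\ell)$. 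The same argument works for any $v$ and $e$ whose image is the image of $\widetilde{e}$, and by cardinality that is forced once $\len{v}=\len{u}$.

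\emph{Part 2.} Suppose $\widetilde{e} = e\circ\widetilde{f} = e'\circ\widetilde{f}'$ with $\len{v} = \len{u}$.
\begin{itemize}
\item Since $e([\len{v}]) \subseteq \widetilde{e}([\len{u}])$ and both sets have $\len{u}$ elements, they are equal.
\item A strictly increasing map onto a fixed finite set is unique, so $e = e'$, and therefore $v = v'$ because $v[i] = w[e(i)]$.
\item Since $e$ is injective, $e\circ\widetilde{f} = e\circ\widetilde{f}'$ gives $\widetilde{f} = \widetilde{f}'$.
\end{itemize}

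\emph{Part 3.} The composition $e\circ\widetilde{e}$ is injective, and its letters match since $w[e(\widetilde{e}(i))] = v[\widetilde{e}(i)] = u[i]$. If $S$ witnesses that $\widetilde{e}$ is $\ell$-jumbled, then $e\circ\widetilde{e}|_S$ is strictly increasing, since $e$ is strictly increasing on all of $[\len{v}]$.

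The main obstacle is not any computation but pinning down the intended quantifiers in item 1. The phrase ``for any $v$'' has to mean that $v$ and $e$ come from the image of $\widetilde{e}$ as above; for an arbitrary $v$ the factorisation may fail to exist. I would state this reading explicitly and check the cardinality argument that forces equality of the images.
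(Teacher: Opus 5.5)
Your proof is correct and follows essentially the same route as the paper: you define $\widetilde f := e^{-1}\circ\widetilde e$ on the common image, transfer the increasing set $S$ through the strictly increasing map $e$, get uniqueness from the fact that a strictly increasing injection onto a fixed image is unique, and check the composition directly in Part 3. The only cosmetic difference is in Part 2, where you obtain equality of the images by a cardinality count while the paper derives it from the bijectivity of $\widetilde f$ via Lemma~\ref{lem:bijective-embedding}; your reading of item 1, that the image of $e$ must coincide with that of $\widetilde e$, is exactly what the paper's proof uses implicitly.
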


\begin{proof}
\emph{Ad 1.} Let $u , w \in \Sigma^{*} , \ell \in \N_0$ and \(\widetilde{e} \in \loosejumbledembeddings(u, w) \). Let $v \in \ScatFact_{\geq \len{u}}(w)$ and $e \in \embeddings(v, w)$ be an embedding with \(e([\len{v}]) \subset \widetilde{e}([\len{u}]) \). Then, \(\widetilde{f} : [\len{u}] \rightarrow [\len{v}] \), $j \mapsto e^{-1}(\widetilde{e}(j))$ is a well-defined jumbled embedding and $\widetilde{e} = e \circ \widetilde{f}$. Finally, for any set $S \subset [\len{u}]$ with $\restrict{e \circ \widetilde{f}}{ S}$ increasing, we know that $\restrict{\widetilde{f}}{ S}$ must be increasing since $e$ is increasing.

\emph{Ad 2.} Now, let $\len{v} = \len{u}$ and \(\widetilde{e} = e' \circ \widetilde{f}' \) be another factorisation with \(e' \in \embeddings(v, w), \widetilde{f} \in \loosejumbledembeddings_\ell(u, w) \). From Lemma~\ref{lem:bijective-embedding} we know that $\widetilde{f} , \widetilde{f}'$ are bijective. Thus, $\operatorname{image}(e) = \operatorname{image}(\widetilde{e}) = \operatorname{image}(e')$ and \(e = e' \) since $e , e'$ are increasing and injective. Now, \(e = e' \circ \widetilde{f}' \circ \widetilde{f}^{-1} = e' \) and thus \(\widetilde{f}' \circ \widetilde{f}^{-1} = id_{[\len{v}]} \Leftrightarrow \widetilde{f}' = \widetilde{f} \) since $e'$ is injective.

\emph{Ad 3.} Finally, let \(v \in \Sigma^{*} , e \in \embeddings(v, w), \widetilde{e} \in \loosejumbledembeddings(u, v) \) and \(\widetilde{e}' \mathrel{:=} e \circ \widetilde{e} \). Then, $\widetilde{e}'$ is injective as the composition of injective functions and for any $S \subset [\len{u}]$ we know that $\restrict{\widetilde{e}}{ S}$ increasing implies $\restrict{\widetilde{e}'}{ S}$ is increasing as the composition of increasing functions. Further, $u[i] = v[\widetilde{e}(i)] = w[e(\widetilde{e}(i))] = w[(e \circ \widetilde{e})(i)]$.
\end{proof}

\begin{remark}
Notably, by the proof of \cref{lem:jumbled-embedding-factorisation}, there is also only one $v$ such that we get a unique factorisation since $e([\len{v}]) = \widetilde{e}([\len{u}])$. From now on, we will denote this word by $\embeddingword(\widetilde{e})$. \end{remark}

\begin{theorem}\label{prop:split-on-length}
Let $u , w \in \Sigma^{*}$ and $\ell \in \N_0$. Then \[
\jbinom[\ell]{w}{u} = \sum_{v \in \ScatFact_{\len{u}}(w)} \binom{w}{v} \jbinom[\ell]{v}{u}.
\] 
\end{theorem}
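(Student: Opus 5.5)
We prove the identity by exhibiting a bijection
\[
\Phi:\ \loosejumbledembeddings(u,w,\ell)\ \longrightarrow\ \bigsqcup_{v\in\ScatFact_{\len{u}}(w)} \embeddings(v,w)\times\loosejumbledembeddings(u,v,\ell),
\]
where $\bigsqcup$ denotes the disjoint union over $v$. Counting both sides then gives the claim, since $|\embeddings(v,w)|=\binom{w}{v}$ and $|\loosejumbledembeddings(u,v,\ell)|=\jbinom[\ell]{v}{u}$.

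\emph{Construction of $\Phi$.} Given $\widetilde{e}\in\loosejumbledembeddings(u,w,\ell)$, let $v:=\embeddingword(\widetilde{e})$. Concretely, sort the image $\widetilde{e}([\len{u}])$ increasingly as $j_1<\dots<j_{\len{u}}$ and set $v:=w[j_1]\cdots w[j_{\len{u}}]$. Then $v\in\ScatFact_{\len{u}}(w)$, because $\widetilde{e}$ is injective so the image has exactly $\len{u}$ elements. The map $e:i\mapsto j_i$ lies in $\embeddings(v,w)$ and satisfies $e([\len{v}])=\widetilde{e}([\len{u}])$. By Lemma~\ref{lem:jumbled-embedding-factorisation}(1) we get $\widetilde{e}=e\circ\widetilde{f}$ with $\widetilde{f}\in\loosejumbledembeddings(u,v,\ell)$. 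Since $\len{v}=\len{u}$, part (2) of that lemma makes this factorisation unique. We set $\Phi(\widetilde{e}):=(v,e,\widetilde{f})$.

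\emph{Inverse map.} Define $\Psi(v,e,\widetilde{f}):=e\circ\widetilde{f}$. By Lemma~\ref{lem:jumbled-embedding-factorisation}(3), $\Psi$ lands in $\loosejumbledembeddings(u,w,\ell)$, and $\Psi\circ\Phi=\texttt{id}$ holds by construction. For $\Phi\circ\Psi=\texttt{id}$, start from any triple $(v,e,\widetilde{f})$ with $v\in\ScatFact_{\len{u}}(w)$. By Lemma~\ref{lem:bijective-embedding}, $\widetilde{f}$ is a bijection $[\len{u}]\to[\len{v}]$, so
\[
\operatorname{image}(e\circ\widetilde{f})=\operatorname{image}(e).
\]
Hence $\embeddingword(e\circ\widetilde{f})=v$, because an increasing map is determined by its image and so $e$ is the unique increasing enumeration of that set. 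The uniqueness in part (2) then forces $\Phi(e\circ\widetilde{f})=(v,e,\widetilde{f})$. In particular, triples attached to different $v$ yield different $\widetilde{e}$, so the union on the right is genuinely disjoint and nothing is double counted.

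\emph{Main obstacle.} The delicate point is the range of the sum. A priori one might worry that $\widetilde{e}$ also factors through longer words $v$ in a non-unique way, as part (1) of the lemma allows. The key observation is to restrict the summation to $\ScatFact_{\len{u}}(w)$, where the intermediate word is pinned down by the image of $\widetilde{e}$. Injectivity of $\widetilde{e}$, together with bijectivity of $\widetilde{f}$ from Lemma~\ref{lem:bijective-embedding}, is what makes these images match. This also covers $u$ that are not $\ell$-jumbled in $w$: both sides are then $0$, since the bijection is between empty sets.
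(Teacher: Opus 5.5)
Your proof is correct and follows the paper's route. The paper likewise gets one inequality from the unique factorisation of Lemma~\ref{lem:jumbled-embedding-factorisation} over $\embeddingword(\widetilde{e})\in\ScatFact_{\len{u}}(w)$ and the other from part~(3) of that lemma; your explicit check of $\Phi\circ\Psi=\texttt{id}$, via bijectivity of $\widetilde{f}$ from Lemma~\ref{lem:bijective-embedding}, usefully makes explicit the injectivity of $(v,e,\widetilde{f})\mapsto e\circ\widetilde{f}$, which the paper's ``$\geq$'' step uses only tacitly.
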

\begin{proof}
By \cref{lem:jumbled-embedding-factorisation}, any jumbled embedding $\widetilde{e} \in \loosejumbledembeddings(u, w,\ell)$ factorises uniquely over $\embeddingword(\widetilde{e})$. Since \(\embeddingword(\widetilde{e}) \in \ScatFact_{\len{u}}(w) \), this implies \(\jbinom[\ell]{w}{u} \leq \sum_{v \in \ScatFact_{\len{u}}(w)} \binom{w}{v} \jbinom[\ell]{v}{u} \). On the other hand, any $e \in \embeddings(v, w)$ and \(
\widetilde{e} \in \loosejumbledembeddings(u, v,\ell)
\) yields a jumbled embedding \(
e \circ \widetilde{e} \in \loosejumbledembeddings(u, w,\ell)
\), so \(
\jbinom[\ell]{w}{u} \geq \sum_{v \in \ScatFact_{\len{u}}(w)} \binom{w}{v} \jbinom[\ell]{v}{u}
\).
\end{proof}

\subsection{A Connection to Young Tableaux}

In 1961 Schensted related permutations to Young tableaux \cite{Schensted_1961}. A Young tableaux, which is an arrangement of $n$ boxes labelled with numbers, has a shape $\mu=\mu(n,\ell)$. The shape represents a partition of $n$ and the largest element of this partition is called the width $\ell$ of the shape. In his work Schensted showed that there is a one-to-one correspondence between permutations of the numbers $1,2,\ldots,n$ and pairs of standard Young tableaux that have the same shape with $n$ boxes. Moreover, the width of the shape is exactly the length of the longest increasing subsequence of the permutation represented. 
Since Young tableaux are well-researched we spend the rest of this section to illuminate the connection between Young tableaux and the jumbled binomial coefficient. To be more precise, we consider the \emph{hook length formula} which uses the hook lengths $h_j^{\mu(n,\ell)}$ (see \cite{FRT54}) to obtain the following result.

\begin{proposition}\label{lem:count-lis-permutations} 
Let $\ell , n \in \mathbb{N}$. Then, the number of permutations with length $n$ and longest increasing subsequence of length $\ell$ is \[P(n ,\ell) :=
\sum_{\substack{\mu(n,\ell)}} \left(\frac{n !}{\prod_{j = 1}^n h_j^{\mu(n,\ell)}}\right)^2
\] where $\mu(n,\ell)$ is a shape having size $n$ and width $\ell$ and $h_j^{\mu(n,\ell)}$ are the hook lengths of $\mu(n,\ell)$. 
\end{proposition}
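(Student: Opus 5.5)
The count follows directly from the Robinson--Schensted correspondence together with the hook length formula, so the plan is to combine these two classical results.

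\emph{Step 1 (bijection).} The Robinson--Schensted correspondence \cite{Schensted_1961} is a bijection $\pi \mapsto (P(\pi),Q(\pi))$ between permutations of $[n]$ and ordered pairs of standard Young tableaux of the same shape $\mu$, where $\mu$ is a partition of $n$. Schensted's theorem adds that the length of the first row of $\mu$ (its width, the largest part) equals the length of the longest increasing subsequence of $\pi$. Restricting the bijection, the permutations of $[n]$ whose longest increasing subsequence has length exactly $\ell$ correspond bijectively to the disjoint union, over all shapes $\mu(n,\ell)$ of size $n$ and width $\ell$, of the pairs $(P,Q)$ of standard Young tableaux of shape $\mu(n,\ell)$.

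\emph{Step 2 (counting).} Let $f^{\mu}$ denote the number of standard Young tableaux of shape $\mu$. For a fixed shape the two tableaux are chosen independently, so there are $(f^{\mu})^2$ pairs. Summing over all shapes of width $\ell$ gives
\[
P(n,\ell)=\sum_{\mu(n,\ell)} \bigl(f^{\mu(n,\ell)}\bigr)^2 .
\]

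\emph{Step 3 (hook length formula).} By the Frame--Robinson--Thrall hook length formula \cite{FRT54}, $f^{\mu}=n!/\prod_{j=1}^n h_j^{\mu}$, where $h_j^{\mu}$ is the hook length of the $j$-th box. Substituting this into the sum from Step 2 yields the stated formula.

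\emph{Where care is needed.} The only delicate point is matching conventions. The paper's ``width'' must be the length of the first row, which is the largest part of the partition, and that is the row whose length equals the longest increasing subsequence in Schensted's theorem. If a reader uses the transposed convention, the longest decreasing subsequence appears instead, so the convention should be stated explicitly. It should also be noted that the sum is over shapes with width exactly $\ell$, giving LIS exactly $\ell$. Summing the formula over widths $\ge m$ then recovers the count of permutations with LIS at least $m$, which is what the function $\operatorname{LIS}$ in the earlier recurrence requires.
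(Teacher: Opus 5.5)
Your proposal is correct and follows essentially the same route as the paper. It uses the Robinson--Schensted bijection with pairs of standard Young tableaux of equal shape, Schensted's theorem identifying the longest increasing subsequence with the first-row length (the paper's ``width''), so that each shape $\mu$ contributes $(f^{\mu})^2$ permutations, and the hook length formula to evaluate $f^{\mu}=n!/\prod_{j} h_j^{\mu}$, with your explicit remark on the row-versus-column convention being a worthwhile clarification that the paper leaves implicit.
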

\begin{proof}
    Let $H_{\mu(n,\ell)} = \prod_{j=1}^n h_j^{\mu(n,\ell)}$. The hook length formula \cite{FRT54} states the  number of standard Young tableaux of shape $\mu(n,\ell)$ is given by $\frac{n!}{H_{\mu(n,\ell)}}$.
    By Schensted's correspondence \cite{Schensted_1961}, we know that permutations correspond to pairs of standard Young tableaux and thus there are $\left(\frac{n!}{H_{\mu(n,\ell)}}\right)^2$ permutations that yield different pairs of Young tableaux of shape $\mu(n,\ell)$ under repeated application of Schensted's operation. Now, we know that the length of the longest increasing subsequence is exactly the width of the shape. So, the sum over all shapes with width $\ell$ yields the number of permutations with length $n$ and longest increasing subsequence $\ell$.
\end{proof}

We know that the jumbled binomial coefficient $\jbinom[\ell]{w}{u}$ counts a restricted set of permutations with longest increasing subsequence of length at least $\len{u} - \ell$. This, leads us to the following corollary.

\begin{corollary}
Let $\ell \in \N_0$, $n , m \in \mathbb{N}$ and $\ta \in \Sigma$. Then, for $u \mathrel{:=} \ta^n$ and \(
w \mathrel{:=} \ta^m
\) we get \begin{align*}
\jbinom[\ell]{w}{u} = \binom{w}{u} \cdot \sum_{\ell' \in [\ell]_0} P(n , n -\ell') = \binom{m}{n} \cdot \sum_{\ell' \in [\ell]_0} P(n , n -\ell').
\end{align*}
\end{corollary}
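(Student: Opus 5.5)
We will derive the identity from \cref{prop:split-on-length} together with \cref{lem:count-lis-permutations}. Over the unary word $w=\ta^m$ we have $\ScatFact_{n}(w)=\{\ta^n\}$ when $n\le m$, and this set is empty when $n>m$. Applying \cref{prop:split-on-length} therefore gives
\[
\jbinom[\ell]{w}{u}=\binom{w}{\ta^n}\jbinom[\ell]{\ta^n}{\ta^n}=\binom{m}{n}\jbinom[\ell]{\ta^n}{\ta^n}.
\]
The equality $\binom{w}{u}=\binom{m}{n}$ is the standard count of increasing maps $[n]\to[m]$, i.e.\ of the $n$-element subsets of $[m]$. If $n>m$, both sides are $0$ under the convention $\binom{m}{n}=0$, so that case is trivial and we may assume $n\le m$.

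Next we compute $\jbinom[\ell]{\ta^n}{\ta^n}$. Because $u$ and $w$ are unary, every injective map $e:[n]\to[n]$ satisfies the letter condition $u[i]=w[e(i)]$. Such a map is a bijection, i.e.\ a permutation $\pi$ of $[n]$. Such a $\pi$ is an $\ell$-jumbled embedding exactly when some $S\subseteq[n]$ with $|S|\ge n-\ell$ makes $\pi|_S$ strictly increasing. This holds precisely when the longest increasing subsequence of the sequence $\pi(1)\cdots\pi(n)$ has length at least $n-\ell$. So $\jbinom[\ell]{\ta^n}{\ta^n}$ equals the number of permutations of $[n]$ whose longest increasing subsequence has length in $\{n-\ell,\dots,n\}$.

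Finally, we partition these permutations by the exact LIS length $n-\ell'$ with $\ell'\in[\ell]_0$. Each block has size $P(n,n-\ell')$ by \cref{lem:count-lis-permutations}, and summing over the blocks gives the claim. Two boundary issues need care, and this is the only real obstacle. First, the terms with $\ell'\ge n$ give widths $n-\ell'\le 0$. For $n\ge1$ no shape has such a width, so those sums are empty and $P(n,n-\ell')=0$; the restriction of $\ell'$ to $[\ell]_0$ therefore causes no overcount even when $\ell\ge n$. Second, \cref{lem:count-lis-permutations} is stated only for $\ell,n\in\N$. We apply it only with width $n-\ell'\ge1$ and treat width $\le 0$ separately as the empty sum, equal to zero.
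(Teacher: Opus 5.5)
Your proposal is correct and follows essentially the same route as the paper. You reduce via \cref{prop:split-on-length} to $\binom{m}{n}\jbinom[\ell]{\ta^n}{\ta^n}$, then count permutations of $[n]$ whose longest increasing subsequence has length at least $n-\ell$, partitioning by exact length with \cref{lem:count-lis-permutations}. Your version is more careful than the paper's: its proof says ``at most $n-\ell$'' where ``at least'' is meant, and it does not address the boundary cases $n>m$ and $\ell\ge n$ that you handle explicitly.
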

\begin{proof}
By \cref{prop:split-on-length}, we know \(\jbinom[\ell]{w}{u} = \binom{w}{u} \jbinom[\ell]{u}{u} \) since either $\binom{w}{u} = 0$ or $u$ is the only scattered factor of $w$. Now, $\jbinom[\ell]{u}{u}$ counts exactly the bijective mappings $[n] \rightarrow [n]$ with longest increasing subsequence of length at most $n -\ell$ which is exactly \(\sum_{\ell'\leq\ell} P(n , n -\ell') \) by \cref{lem:count-lis-permutations}. \end{proof}

%% file: conclusion.tex
In this work, we followed the very young research of jumbled scattered factors introduced in \cite{DBLP:conf/dlt/FleischmannHKK25}.
Even though the definition of jumbled scattered factors in \cite{DBLP:conf/dlt/FleischmannHKK25} is in a way a generalisation of scattered factors, we altered this definition in order to obtain $\JScatFact(w,0)=\ScatFact(w)$ for all $w\in\Sigma^{\ast}$. Moreover, we tightened the connection
between jumbled scattered factors and longest common scattered factors by introducing the language $\LCSF(w,k,\ell)$
as all words of length $k$ having a longest common scattered factor with $w$ of length least $k-\ell$.

Furthermore,
we worked in two distinct but related directions: the jumbled $k$-Simon equivalence and the number of embeddings of a jumbled scattered factor in a word. Regarding the first part, we presented three algorithms for testing the $\ell$-jumbled $k$-Simon equivalence. Specifically,
we tested the equivalence
for a small $k$ in the first algorithm,
and we tested the equivalence
for arbitrary $k$
where $k-\ell$ and $\ell$ are constant, respectively.
In the second part,
we extended the well-established research on the number of embeddings (different occurrences) from scattered factors (or even factors) to jumbled scattered factors. Therefore, we firstly introduced the notion of the jumbled binomial coefficient as a generalisation of the binomial coefficient for scattered factors. Here, we presented a recursion to compute the number of jumbled embeddings based on the well-known notion of longest increasing subsequences. In a second step, we computed the jumbled binomial coefficient based on decomposition. Lastly, we connected it to the famous Young tableaux.  
Furthermore, we contributed to the possible sizes of the jumbled Simon equivalence classes.

\medskip

\textbf{Acknowledgment.} We like to thank Melf Kammholz for sharing ideas and fruitful discussions.

\textbf{Disclaimer.} The authors ensure that the research was conducted without any AI usage.

%% file: bib.bib
@book{lothaire,
	author={M. Lothaire},
	place={Cambridge},
	series={Cambridge Mathematical Library},
	title={Combinatorics on Words},
	publisher={Cambridge University Press},
	year={1997},
	collection={Cambridge Mathematical Library}
}

@article{KarkkainenSB06,
	author    = {J. K{\"{a}}rkk{\"{a}}inen and
	P. Sanders and
	S. Burkhardt},
	title     = {Linear work suffix array construction},
	journal   = {J. {ACM}},
	volume    = {53},
	number    = {6},
	pages     = {918--936},
	year      = {2006},
}

@article{DBLP:journals/fuin/KoscheKMS22,
	author       = {M. Kosche and
	T. Ko{\ss} and
	F. Manea and
	S. Siemer},
	title        = {Absent Subsequences in Words},
	journal      = {Fundam. Informaticae},
	volume       = {189},
	number       = {3-4},
	pages        = {199--240},
	year         = {2022}
}

@article{DBLP:journals/algorithmica/BannaiIKKP24,
	author       = {H. Bannai and
	T. I and
	T. Kociumaka and
	D. K{\"{o}}ppl and
	S. J. Puglisi},
	title        = {Computing Longest {L}yndon Subsequences and Longest Common {L}yndon Subsequences},
	journal      = {Algorithmica},
	volume       = {86},
	number       = {3},
	pages        = {735--756},
	year         = {2024}
}

@article{DBLP:journals/aam/ForresterM23,
	author       = {P. J. Forrester and
	A. Mays},
	title        = {Finite size corrections relating to distributions of the length of
	longest increasing subsequences},
	journal      = {Adv. Appl. Math.},
	volume       = {145},
	pages        = {102482},
	year         = {2023}
}

@inproceedings{DBLP:conf/isaac/AdamsonFHKMN23,
	author       = {D. Adamson and
	P. Fleischmann and
	A. Huch and
	T. Ko{\ss} and
	F. Manea and
	D. Nowotka},
	title        = {k-Universality of Regular Languages},
	booktitle    = {{ISAAC}	2023},
	series       = {LIPIcs},
	volume       = {283},
	pages        = {4:1--4:21},
	publisher    = {Schloss Dagstuhl - Leibniz-Zentrum f{\"{u}}r Informatik},
	year         = {2023}
}

@article{DBLP:journals/jcss/MateescuSY04,
	author       = {A. Mateescu and
	A. Salomaa and
	S. Yu},
	title        = {Subword histories and Parikh matrices},
	journal      = {J. Comput. Syst. Sci.},
	volume       = {68},
	number       = {1},
	pages        = {1--21},
	year         = {2004}
}

@inproceedings{DBLP:conf/lics/HalfonSZ17,
	author       = {S. Halfon and
	P. Schnoebelen and
	G. Zetzsche},
	title        = {Decidability, complexity, and expressiveness of first-order logic
	over the subword ordering},
	booktitle    = {{LICS} 2017},
	pages        = {1--12},
	publisher    = {{IEEE} Computer Society},
	year         = {2017}
}

@article{DBLP:journals/ipl/KarandikarKS15,
	author       = {P. Karandikar and
	M. Kufleitner and
	P. Schnoebelen},
	title        = {On the index of {S}imon's congruence for piecewise testability},
	journal      = {Inf. Process. Lett.},
	volume       = {115},
	number       = {4},
	pages        = {515--519},
	year         = {2015}
}

@article{DBLP:journals/lmcs/KarandikarS19,
	author       = {P. Karandikar and
	P. Schnoebelen},
	title        = {The height of piecewise-testable languages and the complexity of the
	logic of subwords},
	journal      = {Log. Methods Comput. Sci.},
	volume       = {15},
	number       = {2},
	year         = {2019}
}

@InProceedings{10.1007/978-3-030-17127-8_20,
	author="Kuske, D.
	and Zetzsche, G.",
	title="Languages Ordered by the Subword Order",
	booktitle="FoSSaCS",
	year="2019",
	publisher="Springer International Publishing",
	pages="348--364"
}

@InProceedings{10.1007/978-3-030-50026-9_21,
	author="Kuske, D.",
	title="The Subtrace Order and Counting First-Order Logic",
	booktitle="{CSR}",
	year="2020",
	publisher="Springer International Publishing",
	pages="289--302"
}

@phdthesis{simon1972hierarchies,
	title={Hierarchies of events with dot-depth one},
	author={Simon, I.},
	year={1972},
	school={University of Waterloo}
}

@inproceedings{DBLP:conf/automata/Simon75,
	author       = {I. Simon},
	title        = {Piecewise testable events},
	booktitle    = {{AFL}, 2nd {GI} Conference, Kaiserslautern,
	May 20-23, 1975},
	series       = {LNCS},
	volume       = {33},
	pages        = {214--222},
	publisher    = {Springer},
	year         = {1975}
}

@InProceedings{zetzsche:LIPIcs.ICALP.2016.123,
	author =	{Zetzsche, G.},
	title =	{{The Complexity of Downward Closure Comparisons}},
	booktitle =	{ICALP 2016},
	pages =	{123:1--123:14},
	series =	{LIPIcs},
	year =	{2016},
	volume =	{55},
	publisher =	{Schloss Dagstuhl -- Leibniz-Zentrum f{\"u}r Informatik}
}

@article{DBLP:journals/tcs/Baeza-Yates91,
	author       = {R. A. Baeza{-}Yates},
	title        = {Searching Subsequences},
	journal      = {Theor. Comput. Sci.},
	volume       = {78},
	number       = {2},
	pages        = {363--376},
	year         = {1991}
}

@inproceedings{DBLP:conf/soda/BringmannK18,
	author       = {K. Bringmann and
	M. K{\"{u}}nnemann},
	title        = {Multivariate Fine-Grained Complexity of Longest Common Subsequence},
	booktitle    = {{SODA} 2018},
	pages        = {1216--1235},
	publisher    = {{SIAM}},
	year         = {2018}
}

@article{DBLP:journals/jacm/Maier78,
	author       = {D. Maier},
	title        = {The Complexity of Some Problems on Subsequences and Supersequences},
	journal      = {J. {ACM}},
	volume       = {25},
	number       = {2},
	pages        = {322--336},
	year         = {1978}
}

@article{DBLP:journals/jacm/WagnerF74,
	author       = {R. A. Wagner and
	M. J. Fischer},
	title        = {The String-to-String Correction Problem},
	journal      = {J. {ACM}},
	volume       = {21},
	number       = {1},
	pages        = {168--173},
	year         = {1974}
}

@article{DBLP:journals/bib/DoLL21,
	author       = {D. T. Do and
	T. T. Q. Le and
	N.{-}Q.{-}K. Le},
	title        = {Using deep neural networks and biological subwords to detect protein
	S-sulfenylation sites},
	journal      = {Briefings Bioinform.},
	volume       = {22},
	number       = {3},
	year         = {2021}
}

@inproceedings{DBLP:conf/aaai/WangCG20,
	author       = {C. Wang and
	K. Cho and
	J. Gu},
	title        = {Neural Machine Translation with Byte-Level Subwords},
	booktitle    = {{EAAI} 2020},
	pages        = {9154--9160},
	publisher    = {{AAAI} Press},
	year         = {2020}
}

@inproceedings{DBLP:conf/stacs/GawrychowskiKKM21,
	author       = {P. Gawrychowski and
	M. Kosche and
	T. Ko{\ss} and
	F. Manea and
	S. Siemer},
	title        = {Efficiently Testing {S}imon's Congruence},
	booktitle    = {{STACS} 2021},
	series       = {LIPIcs},
	volume       = {187},
	pages        = {34:1--34:18},
	publisher    = {Schloss Dagstuhl - Leibniz-Zentrum f{\"{u}}r Informatik},
	year         = {2021}
}

@inproceedings{DBLP:conf/mfcs/FleischerK18,
	author       = {L. Fleischer and
	M. Kufleitner},
	title        = {Testing {S}imon's congruence},
	booktitle    = {{MFCS} 2018},
	series       = {LIPIcs},
	volume       = {117},
	pages        = {62:1--62:13},
	publisher    = {Schloss Dagstuhl - Leibniz-Zentrum f{\"{u}}r Informatik},
	year         = {2018}
}

@inproceedings{DBLP:conf/dlt/KimHKS23,
	author       = {S. Kim and
	Y.{-}S. Han and
	S.{-}K. Ko and
	K. Salomaa},
	title        = {On the {S}imon's Congruence Neighborhood of Languages},
	booktitle    = {DLT},
	series       = {LNCS},
	volume       = {13911},
	pages        = {168--181},
	publisher    = {Springer},
	year         = {2023}
}

@inproceedings{DBLP:conf/isaac/DayKMS22,
  author       = {J.D. Day and
                  M. Kosche and
                  F. Manea and
                  M.L. Schmid},
  title        = {Subsequences with Gap Constraints: Complexity Bounds for Matching
                  and Analysis Problems},
  booktitle    = {{ISAAC}},
  series       = {LIPIcs},
  volume       = {248},
  pages        = {64:1--64:18},
  publisher    = {Schloss Dagstuhl - Leibniz-Zentrum f{\"{u}}r Informatik},
  year         = {2022},
}

@article{SchmidSIGMOD,
	author = {Sattler, R. and Kleest-Mei\ss{}ner, S. and Lange, S. and Schmid, M. L. and Schweikardt, N. and Weidlich, M.},
	title = {DISCES: Systematic Discovery of Event Stream Queries},
	year = {2025},
	issue_date = {February 2025},
	publisher = {Association for Computing Machinery},
	address = {New York, NY, USA},
	volume = {3},
	number = {1},
	journal = {Proc. ACM Manag. Data}
}

@inproceedings{FrochauxK23,
	author       = {A. Frochaux and
	S. Kleest{-}Mei{\ss}ner},
	title        = {Puzzling over Subsequence-Query Extensions: Disjunction and Generalised
	Gaps},
	booktitle    = {{AMW} 2023},
	series       = {{CEUR} Workshop Proceedings},
	volume       = {3409},
	publisher    = {CEUR-WS.org},
	year         = {2023},
}

@inproceedings{Kleest-Meissner23,
	author    = {S. Kleest{-}Mei{\ss}ner and
	R. Sattler and
	Markus L. Schmid and
	N. Schweikardt and
	M. Weidlich},
	title     = {Discovering Multi-Dimensional Subsequence Queries from Traces - From
	Theory to Practice},
	booktitle = {{BTW}},
	series    = {{LNI}},
	volume    = {{P-331}},
	pages     = {511--533},
	year      = {2023},
}

@book{DBLP:books/cu/Gusfield1997,
  author       = {D. Gusfield},
  title        = {Algorithms on Strings, Trees, and Sequences - Computer Science and
                  Computational Biology},
  publisher    = {Cambridge University Press},
  year         = {1997},
  isbn         = {0-521-58519-8},
}

@inproceedings{Kleest-Meissner22,
	author       = {S. Kleest{-}Mei{\ss}ner and
	R. Sattler and
	M. L. Schmid and
	N. Schweikardt and
	M. Weidlich},
	title        = {Discovering Event Queries from Traces: Laying Foundations for Subsequence-Queries
	with Wildcards and Gap-Size Constraints},
	booktitle    = {{ICDT}},
	series       = {LIPIcs},
	volume       = {220},
	pages        = {18:1--18:21},
	year         = {2022},
}

@inproceedings{artikis2017complex,
	title={Complex event recognition languages: Tutorial},
	author={Artikis, A. and Margara, A. and Ugarte, M. and Vansummeren, S. and Weidlich, M.},
	booktitle={DEBS},
	pages={7--10},
	year={2017}
}

@article{DBLP:journals/lmcs/BaumannGTZ23,
	author       = {P. Baumann and
	M. Ganardi and
	R. S. Thinniyam and
	G. Zetzsche},
	title        = {Existential Definability over the Subword Ordering},
	journal      = {Log. Methods Comput. Sci.},
	volume       = {19},
	number       = {4},
	year         = {2023}
}

@inproceedings{DBLP:conf/dlt/FleischmannHKK25,
  author       = {P. Fleischmann and
                  A. Huch and
                  M. Kammholz and
                  T. Ko{\ss}},
  editor       = {Sang{-}Ki Ko and
                  Florin Manea},
  title        = {Jumbled Scattered Factors},
  booktitle    = {{DLT}},
  series       = {Lecture Notes in Computer Science},
  volume       = {16036},
  pages        = {261--277},
  publisher    = {Springer},
  year         = {2025},
}

@inproceedings{DBLP:conf/cwords/Puzynina19,
	author       = {S. Puzynina},
	title        = {Abelian Properties of Words},
	booktitle    = {{WORDS} 2019},
	series       = {LNCS},
	volume       = {11682},
	pages        = {28--45},
	publisher    = {Springer},
	year         = {2019}
}

@article{DBLP:journals/tcs/KarhumakiPRW17,
	author       = {J. Karhum{\"{a}}ki and
	S. Puzynina and
	M. Rao and
	M. A. Whiteland},
	title        = {On cardinalities of k-abelian equivalence classes},
	journal      = {Theor. Comput. Sci.},
	volume       = {658},
	pages        = {190--204},
	year         = {2017}
}

@article{higman1952ordering,
	title={Ordering by divisibility in abstract algebras},
	author={Higman, G.},
	journal={Proceedings of the London Mathematical Society},
	volume={3},
	number={1},
	pages={326--336},
	year={1952},
	publisher={Wiley Online Library}
}

@article{haines1969free,
	title={On free monoids partially ordered by embedding},
	author={Haines, L.H.},
	journal={J. Comb. Theory. Ser.},
	volume={6},
	number={1},
	pages={94--98},
	year={1969},
	publisher={Elsevier}
}

@article{Schensted_1961, title={Longest Increasing and Decreasing Subsequences}, volume={13}, journal={Canadian Journal of Mathematics}, author={Schensted, C.}, year={1961}, pages={179–191}}

@article{FRT54, title={The Hook Graphs of the Symmetric Group}, volume={6}, DOI={10.4153/CJM-1954-030-1}, journal={Canadian Journal of Mathematics}, author={Frame, J. S. and Robinson, G. de B. and Thrall, R. M.}, year={1954}, pages={316–324}}

@inproceedings{SchwentickTV01,
  author       = {Thomas Schwentick and
                  Denis Th{\'{e}}rien and
                  Heribert Vollmer},
  title        = {Partially-Ordered Two-Way Automata: {A} New Characterization of {DA}},
  booktitle    = {Proceedings of the 5th International Conference on Developments in Language Theory, {DLT}},
  pages        = {239--250},
  year         = {2001}
}

@article{DBLP:journals/ijfcs/HolzerK09,
  author       = {M. Holzer and
                  M. Kutrib},
  title        = {Nondeterministic Finite Automata - Recent Results on the Descriptional
                  and Computational Complexity},
  journal      = {Int. J. Found. Comput. Sci.},
  volume       = {20},
  number       = {4},
  pages        = {563--580},
  year         = {2009}
}

@article{DBLP:journals/ipl/Birget92,
  author       = {J.{-}C. Birget},
  title        = {Intersection and Union of Regular Languages and State Complexity},
  journal      = {Inf. Process. Lett.},
  volume       = {43},
  number       = {4},
  pages        = {185--190},
  year         = {1992}
}


%% file: shortcortana.bib
@string{may	= {Mai}}

@string{may	= {}}

@string{LNCS = {LNCS}}

@string{LIPIcs = {LIPIcs}}

@string{STACS = {Symposium on Theoretical Aspects of Computing}}

@string{ICALP = {International Colloquium on Automata, Languages, and Programming}}

@string{MFCS = {Mathematical Foundations of Computer Science}}

@string{Springer = {Springer}}

@string{PPL  = {Par. Process. Lett.}}
